\newtheorem{theorem}{Theorem}[section]
\newtheorem{lemma}[theorem]{Lemma}
\newtheorem{corollary}[theorem]{Corollary}
\title{The
Complexity of Finding Multiple Solutions to Betweenness and 
Quartet Compatibility}
\author{Maria Luisa Bonet\thanks{
  Lenguajes y Sistemas Informaticos (LSI),
  Universitat Polit\`{e}cnica de Catalunya (UPC),
  08034 Barcelona, Spain,
  \tt{bonet@lsi.upc.edu}.}
\and Simone Linz\footnotemark[1] \thanks{Corresponding author: {\tt simone\_linz@yahoo.de}.}
\and Katherine St.~John\thanks{
  Department of Math \& Computer Science,
  Lehman College-- City University of New York (CUNY),
  Bronx, NY 12581, United States,
  \tt{ stjohn@lehman.cuny.edu}.}
  \footnote{
      Department of Computer Science,
  CUNY Graduate Center, New York, NY 10016.}
}
\begin{document}
\maketitle

\begin{abstract}
We show that two important problems that have applications in computational biology are ASP-complete, which implies that, given a solution to a problem, it is NP-complete
to decide if another solution exists.  We show first that a variation of {\sc Betweenness}, which is the underlying problem of questions related to radiation hybrid
mapping, is ASP-complete. Subsequently, we use that result to show that 
{\sc Quartet Compatibility}, a fundamental problem in phylogenetics that asks whether a set of quartets can be represented by a parent tree, is also ASP-complete.  The latter result
shows that Steel's {\sc Quartet Challenge}, which asks whether a solution to {\sc Quartet Compatibility} is unique, is coNP-complete. 
\end{abstract}

\section{Introduction}

Many biological problems focus on synthesizing data to 
yield new information.  We focus on the complexity of two
such problems.  The first is motivated by radiation hybrid
mapping (RH mapping) \cite{cox90} which  was developed to construct long range
maps of mammalian chromosomes (e.g.~\cite{Amaral2008a,Chowdhary2003a,Gyapay1996a}).  
Roughly, RH mapping uses x-rays to break the DNA into fragments and gives the relative order of DNA markers on
the fragments \cite{cox90}.  
The underlying computational problem is to assemble these 
fragments into a single strand (i.e., a ``linear order'').
As Chor and Sudan \cite{chor98} show, the
assembly of these fragments can be modeled by the well-known decision problem {\sc Betweenness}. Loosely speaking, this 
problem asks if there exists a total ordering over a set of elements that satisfies a set of constraints, each specifying one element to lie between two other elements (see Section~\ref{sec:prelim} for a more detailed definition).  
Since this
classical problem is NP-complete \cite{opatrny79},  Chor and
Sudan \cite{chor98} developed a polynomial-time approximation. Their algorithm, by using a geometric approach, either returns that no betweenness ordering exists or returns such an ordering that satisfies at least one half of the given constraints.  However, in the context of RH mappings, one usually knows the so-called 3' and 5' end of the DNA under consideration. Therefore, we consider a variation of {\sc Betweenness}---called {\sc cBetweenness}---whose instances do not only contain a collection of constraints, but also an explicit specification of the first and last DNA marker, and that asks whether or not there exists a total ordering whose first and last element coincide with the first and last DNA marker, respectively. In this paper, we are particularly interested in the following question related to {\sc cBetweenness}: given a solution to an instance of {\sc cBetweenness}, is there another solution? A positive answer to this question may imply
that the sampling of relatively ordered DNA markers is not large enough to determine
the correct ordering of the entire set of DNA markers.

Our second question focuses on finding the optimal phylogenetic 
tree for a set of taxa (e.g.~species).  Under the most popular optimization
criteria (maximum parsimony and maximum likelihood), it is NP-hard to find the optimal tree 
\cite{fouldsGraham,rochML}.  Despite the NP-hardness, several approaches to this problem exists, one of which finds a phylogenetic tree by
splitting the problem into subproblems, solves the 
subproblems and then recombines the solutions to form a complete phylogenetic tree that represents all taxa under consideration 
\cite{berry2000,buneman,dcm99}.  To this end, quartets which are phylogenetic trees
on 4 taxa are often used. Given 4 taxa, there are 
three possible ways to arrange them:

\vspace{.25in}
  \thicklines
  \setlength{\unitlength}{2.5pt}%
  \hbox to\textwidth{\hfil
    \begin{picture}(40,35)(0,-12)
      \put(0,0){\circle*{2}}
      \put(0,20){\circle*{2}}
      \put(40,0){\circle*{2}}
      \put(40,20){\circle*{2}}
      \put(10,10){\circle*{1}}
      \put(30,10){\circle*{1}}
      \put(0,0){\line(1,1){10}}
      \put(0,20){\line(1,-1){10}}
      \put(40,0){\line(-1,1){10}}
      \put(40,20){\line(-1,-1){10}}
      \put(10,10){\line(1,0){20}}
      \put(-4,0){\makebox(0,0){ $a$}}
      \put(-4,20){\makebox(0,0){$b$}}
      \put(44,0){\makebox(0,0){$c$}}
      \put(44,20){\makebox(0,0){$d$}}
      \put(20,-10){\makebox(0,0){$ab|cd$}}
    \end{picture}%
    \hfil
    \begin{picture}(40,35)(0,-12)
      \put(0,0){\circle*{2}}
      \put(0,20){\circle*{2}}
      \put(40,0){\circle*{2}}
      \put(40,20){\circle*{2}}
      \put(10,10){\circle*{1}}
      \put(30,10){\circle*{1}}
      \put(0,0){\line(1,1){10}}
      \put(0,20){\line(1,-1){10}}
      \put(40,0){\line(-1,1){10}}
      \put(40,20){\line(-1,-1){10}}
      \put(10,10){\line(1,0){20}}
      \put(-4,0){\makebox(0,0){$a$}}
      \put(-4,20){\makebox(0,0){$c$}}
      \put(44,0){\makebox(0,0){$b$}}
      \put(44,20){\makebox(0,0){$d$}}
      \put(20,-10){\makebox(0,0){$ac|bd$}}
    \end{picture}%
    \hfil
    \begin{picture}(40,35)(0,-12)
      \put(0,0){\circle*{2}}
      \put(0,20){\circle*{2}}
      \put(40,0){\circle*{2}}
      \put(40,20){\circle*{2}}
      \put(10,10){\circle*{1}}
      \put(30,10){\circle*{1}}
      \put(0,0){\line(1,1){10}}
      \put(0,20){\line(1,-1){10}}
      \put(40,0){\line(-1,1){10}}
      \put(40,20){\line(-1,-1){10}}
      \put(10,10){\line(1,0){20}}
      \put(-4,0){\makebox(0,0){$a$}}
      \put(-4,20){\makebox(0,0){$d$}}
      \put(44,0){\makebox(0,0){$b$}}
      \put(44,20){\makebox(0,0){$c$}}
      \put(20,-10){\makebox(0,0){$ad|bc$}}
    \end{picture}%
    \hfil}\smallskip
\noindent Since the number of possible topologies grows exponentially with the number of taxa, it is easier to decide the arrangement or
topology of each subset of 4 taxa than the topology for a large set of taxa.  This leads to the question:
how hard is it to build a tree from quartets?  If $Q$ denotes the set of all quartets
of a phylogenetic tree $T$, then $T$ is uniquely determined by $Q$ and can be reconstructed in polynomial time \cite{erdos}.
However, in most cases $T$ is not given, and $Q$ is often incomplete (i.e.~there exists a set of 4 taxa for which no quartet is given) or elements in $Q$ contradict each other. In such a case, it is NP-complete to decide whether there exists
a phylogenetic tree on $n$ taxa that {\it displays} $Q$ \cite{steel92}; that is, a phylogenetic tree that explains all the ancestral relationships given by the quartets, where $n$ is the number of taxa over all elements in $Q$. Because of the NP-completeness of this latter problem---called {\sc Quartet Compatibility}---algorithmic approaches are rare. Nevertheless, several attractive graph-theoretic characterizations of the problem exist~\cite{grunewald08,semple02}. While~\cite{semple02} approaches the problem by using a chordal-graph characterization on an underlying intersection graph,~\cite{grunewald08} establishes a so-called quartet graph and edge colorings via this graph to decide whether or not there exists a phylogenetic tree that displays a given set of quartets. As a follow-up on this last question, Steel asks whether or not the following problem, called the {\sc Quartet Challenge}, is NP-hard~\cite{quartetChallenge}:
given a set $Q$ of quartets over $n$ taxa and a phylogenetic tree $T$ on $n$ taxa that displays $Q$, is $T$ the unique such tree that displays $Q$? Although the above-mentioned characterizations~\cite{grunewald08,semple02} comprises several results on when a set of quartets is displayed by a unique phylogenetic tree, the computational complexity of the {\sc Quartet Challenge} remains open.
We note that if a set $Q$ of quartets that does not contain any redundant information is displayed by a unique phylogenetic tree on $n$ taxa, then the minimum size of $Q$ is $n-3$ \cite[Corollary 6.3.10]{sempleSteelBook} while the current largest maximum size is $2n-8$~\cite{dowden10}.

To investigate the computational complexity of problems for when a solution is given and one is interested in finding another solution, Yato and Seta~\cite{yato03} developed the framework of {\sc Another Solution Problems (ASP)}. Briefly, if a problem is ASP-complete, then given a solution to a problem, it is NP-complete to decide if a distinct solution exists. Many canonical problems are ASP-complete, such as several variations
of satisfiability, as well as games like Sudoku \cite{yato03}. 

In this paper, we show that, given a solution to an instance of {\sc cBetweenness} or {\sc Quartet Compatibility}, finding a second solution is ASP-complete. 
To show that {\sc cBetweenness} is ASP-complete,
we use a reduction from a variant of satisfiability, namely,
{\sc Not-All-Equal-3SAT with Constants}
(see Section~\ref{sec:prelim} for a detailed definition). Using this result, we establish a second reduction that subsequently shows that  {\sc Quartet Compatibility} is also ASP-complete. As we will soon see, the  ASP-completeness of {\sc Quartet Compatibility} implies coNP-completeness of the {\sc Quartet Challenge}.


We note that while this paper was in preparation, a preprint was released
\cite{Habib2010a} that addresses the complexity of the {\sc Quartet Challenge} using different
techniques than employed here.

This paper is organized as follows:  Section~\ref{sec:prelim} details background 
information from complexity theory and phylogenetics.  Section~\ref{sec:results} gives 
the two reduction results, and Section~\ref{sec:discussion} contains some concluding
remarks.

\section{Preliminaries}\label{sec:prelim}
This section gives an outline of the ASP-completeness concept and formally states the decision problems that are needed for this paper. Preliminaries in the context of phylogenetics are given in the second part of this section.

\subsection{Computational complexity}
Notation and terminology introduced in this section follows \cite{gareyJohnson} and \cite{yato03}, with the former being an excellent reference for 
general complexity results.\\

{\bf ASP-completeness.}
The notion of {\it ASP-completeness} was first  published by Yato and Seta~\cite{yato03}. Their paper provides a theoretical framework to analyze the computational complexity of problems whose input contains, among others, a solution to a given problem instance, and the objective is to find a distinct solution to that instance or to return that no such solution exists. To this end, the authors use {\it function problems} whose answers can be more complex in contrast to decision problems that are always answered with either {\it `yes'} or {\it `no'}. Formally, the complexity class {\it FNP} contains each function problem $\Pi$ that satisfies the following two conditions:
\begin{itemize}
\item[(i)] There exists a polynomial $p$ such that the size of each solution to a given instance $\psi$ of $\Pi$ is bounded by $p(\psi)$.
\item[(ii)] Given an instance $\psi$ of $\Pi$ and a solution $s$, it can be decided in polynomial time if $s$ is a solution to $\psi$.
\end{itemize}

\noindent Note that the complexity class FNP is a generalization of the class NP and that each function problem in FNP has an analogous decision problem. 

Now, let $\Pi$ and $\Pi'$ be two function problems. We say that a polynomial-time reduction $f$ from $\Pi$ to $\Pi'$ is an {\it ASP-reduction} if for any instance $\psi$ of $\Pi$ there is a bijection from the solutions of $\psi$ to the solutions of $f(\psi)$, where $f(\psi)$ is an instance of $\Pi'$ that has been obtained under $f$. Note that, while each ASP-reduction is a so-called `parsimonious reduction', the converse is not necessarily true. Parsimonious reductions have been introduced in the context of enumeration problems (for details, see~\cite{papadimitriou03}). 
Furthermore, a function problem $\Pi'$ is {\it ASP-complete} if and only if $\Pi'\in \textnormal{FNP}$ and there is an ASP-reduction from $\Pi$ to $\Pi'$ for any function problem $\Pi\in \textnormal{FNP}$. \\

\noindent{\bf Remark.} Throughout this paper, we prove that several problems are ASP-complete. Although these problems are stated as decision problems in the remainder of this section, it should be clear from the context which associated function problems we consider. More precisely, for a decision problem $\Pi_d$, we consider the function problem $\Pi$ whose instance $\psi$ consists of the same parameters as an instance of $\Pi_d$ and, additionally, of a solution to $\psi$, and whose question is to find a distinct solution that fulfills all conditions given in the question of $\Pi_d$. Hence, if $\Pi$ is ASP-complete, then this implies that, unless P=NP, it is computationally hard to find a second solution to an instance of $\Pi$.\\

{\bf Satisfiability (SAT) problems.} The satisfiability problem is a well-known problem in the study of computational complexity. In fact, it was the first problem shown to be NP-complete \cite{gareyJohnson,cook}. Before we can formally state the problem, we need some definitions. 
Let $V = \{x_1,x_2,\ldots,x_n\}$ be a set of variables. A {\it literal} is either a variable $x_i$ or its negation $\bar x_i$, and a {\it clause} is a disjunction of literals. Now let $C$ be a conjunction of clauses (for an example, consider the four clauses given in Figure~\ref{fig:ex1}). A {\it truth assignment} for $C$ assigns each literal to either {\it true} or {\it false} such that, for each $i\in\{1,2,\ldots,n\}$, $x_i=true$ if and only if $\bar x_i=false$. We say that a literal is {\it satisfied} (resp. {\it falsified}) if its truth value is {\it true} (resp. {\it false}). 

\begin{quote}
{\bf Problem:} {\sc Satisfiability}\\
{\bf Instance:} A set of variables $V$ and a conjunction $C$ of clauses over $V$.\\
{\bf Question:} Does there exist a truth assignment for $C$ such that each clause contains at least one literal assigned to {\it true}?
\end{quote}

\begin{figure}
\begin{center}
\begin{tabular}{lll}
$C_1: x_1 \vee x_2 \vee x_3 $ &\mbox{\hspace{.5in}}& $\sigma_0: \{x_1,x_2,x_3,x_4\} \rightarrow true$ \\
$C_2: x_1 \vee \bar x_3 \vee x_4$ && $\sigma_1: \{x_1,x_2\}\rightarrow true; \sigma_1: \{x_3,x_4\} \rightarrow false$\\
$C_3: \bar x_1 \vee x_3 \vee \bar x_4$ && $\sigma_2: \{x_1,x_3,x_4\} \rightarrow true; \sigma_2: \{x_2\} \rightarrow false$\\
$C_4: x_2 \vee \bar x_3 \vee x_4$ & \\

\end{tabular}
\end{center}

\caption{\small Left:  An example of 4 clauses on the variables $x_1,\ldots,x_4$.  Right: Truth assignments to the non-negated literals.
As an instance of {\sc 3SAT}, there are several possible satisfying truth assignments including $\sigma_0$, $\sigma_1$, and $\sigma_2$.
When viewed as an instance of {\sc NAE-3SAT}, $\sigma_0$ is not satisfying since
it assigns all literals of the first clause $C_1$ to {\it true}, violating the `not all equal' condition.}
\label{fig:ex1}
\end{figure}


{\sc 3SAT} is a special case of the general SAT problem in which each clause of a given instance contains exactly three literals. Referring back to Figure~\ref{fig:ex1}, all three truth assignments $\sigma_0$, $\sigma_1$, and $\sigma_2$ satisfy the four clauses for when regarded as an instance of {\sc 3SAT}. The next theorem is due to~\cite[Theorem 3.5]{yato03}.

\begin{theorem}
{\sc 3SAT} is ASP-complete.
\end{theorem}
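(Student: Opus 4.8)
The plan is to verify membership in FNP and then exhibit an ASP-reduction to 3SAT from an arbitrary problem in FNP, following the two classical reductions---generic-verifier-to-SAT and SAT-to-3SAT---while at each stage tracking the extra requirement that the reduction induce a \emph{bijection} on solution sets, as the definition of an ASP-reduction demands.

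First I would dispose of membership. For the function version of 3SAT a solution is a satisfying truth assignment; its size is bounded by the number of variables, which is linear in the instance, giving condition (i), and checking that a proposed assignment satisfies every clause takes polynomial time, giving condition (ii). Hence 3SAT $\in$ FNP. For hardness, fix an arbitrary $\Pi \in$ FNP with polynomial-time deterministic verifier $M$ and polynomial bound $p$, so that the solutions of an instance $\psi$ are exactly the witnesses $s$ with $|s| \le p(|\psi|)$ and $M(\psi,s)=1$. I would apply the Cook--Levin tableau construction to $M$ with $\psi$ hard-wired and the bits of $s$ left as the free input variables, producing a CNF formula $\phi$. The crucial point---and the reason this is an ASP-reduction rather than merely a polynomial one---is that $M$ is deterministic: once the bits of $s$ are fixed, the entire computation tableau is forced, so all tableau variables are uniquely determined by $s$. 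Consequently the map sending a witness $s$ to the unique full assignment of $\phi$ it induces is a bijection between the solutions of $\psi$ and the satisfying assignments of $\phi$, and both directions (propagating $s$ through the tableau, and reading $s$ off the input row of an assignment) are polynomial-time computable.

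It remains to convert $\phi$ into an equivalent 3CNF formula without destroying this bijection, and this is where I expect the real difficulty to lie. The textbook clause-splitting that replaces a long clause $(l_1 \vee \cdots \vee l_k)$ by $(l_1 \vee l_2 \vee z_1)\wedge(\bar z_1 \vee l_3 \vee z_2)\wedge\cdots\wedge(\bar z_{k-3}\vee l_{k-1}\vee l_k)$ is \emph{not} parsimonious: when a long clause is satisfied at several positions the auxiliary variables $z_i$ can frequently be set in more than one way, so a single solution of the original clause acquires several extensions and the solution count blows up. To repair this I would force each $z_i$ to be a prescribed function of the original literals---taking $z_i$ to encode the prefix disjunction $l_1 \vee \cdots \vee l_{i+1}$ and adjoining the additional clauses (each of size at most three) that make $z_i \leftrightarrow (z_{i-1} \vee l_{i+1})$ hold---so that every satisfying assignment of the original clause has exactly one consistent extension to the $z_i$. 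Unit and binary clauses need analogous care, since naive padding with fresh dummy literals reintroduces unconstrained variables and again breaks the bijection; I would pad them only against auxiliary variables that are themselves pinned down by added clauses.

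The main obstacle, then, is the clause-by-clause verification that this modified splitting is solution-count preserving, i.e.\ that the induced map on assignments is a bijection and not merely an injection or surjection; this amounts to checking that each local gadget leaves exactly one feasible setting of its auxiliary variables for each satisfying assignment of its host clause. Once that is established, composing the two bijections yields an ASP-reduction from the arbitrary $\Pi \in$ FNP to 3SAT, which together with the membership argument establishes that 3SAT is ASP-complete.
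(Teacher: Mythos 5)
The paper does not actually prove this theorem: it is imported verbatim as \cite[Theorem 3.5]{yato03}, so there is no in-paper argument to compare against. Your reconstruction is essentially the standard proof that Yato and Seta give (and that underlies their framework): the Cook--Levin tableau reduction from a deterministic verifier is already an ASP-reduction because the witness bits determine the entire tableau, and the only delicate step is making the CNF-to-3CNF conversion solution-preserving. Your fix for the long clauses is right --- forcing each auxiliary $z_i$ to equal the prefix disjunction via the biconditional $z_i \leftrightarrow (z_{i-1}\vee l_{i+1})$ makes the $z_i$ a function of the original literals, so each satisfying assignment extends uniquely and non-satisfying ones are killed by the final unit requirement on $z_{k-1}$. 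The one place you wave your hands, padding unit and binary clauses to exactly three distinct literals, is genuinely the fiddly part, but it is doable in the spirit you describe: introduce three fresh variables and add the seven 3-clauses that exclude all but one of their eight joint assignments, pinning them to fixed values parsimoniously, and then pad every short clause with these constants. Two minor technicalities worth a sentence each in a full write-up: witnesses of varying length $|s|\le p(|\psi|)$ must be encoded self-delimitingly so that the input row of the tableau decodes to a unique witness, and one should note explicitly that ASP-reductions compose (bijections compose), so chaining the generic reduction with the 3CNF conversion is legitimate. With those details filled in, your argument is a complete and correct proof of the cited theorem.
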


We will next show the ASP-completeness of another version of SAT that is similar to the following decision problem.

\begin{quote}
{\bf Problem:} {\sc Not-All Equal-3SAT (NAE-3SAT)}\\
{\bf Instance:} A set of variables $V$ and a conjunction $C$ of 3-literal clauses over $V$.\\
{\bf Question:} Is there a truth assignment such that for each clause there is a literal satisfied and a literal falsified by the assignment?
\end{quote}
As an example, see Figure~\ref{fig:ex1}, and note that $\sigma_0$ does not satisfy the four clauses when regarded as an instance of {\sc NAE-3SAT}. It is an immediate consequence of the definition of NAE-3SAT that, given a solution $S$ to an instance, a second solution to this instance can be calculated in polynomial time by taking the complement of $S$; that is assigning each literal to {\it true} (resp. {\it false}) if it is assigned to {\it false} (resp. {\it true}) in $S$ (see~\cite{juban99}). 

The next decision problem can be obtained from NAE-3SAT by allowing for instances that contain the constants $T$ or $F$. 
\begin{quote}
{\bf Problem:} {\sc Not-All Equal-3SAT with constants} ({\sc cNAE-3SAT})\\
{\bf Instance:} A set of variables $V$, constants $T$ and $F$, and a conjunction $C$ of 3-literal clauses over $V\cup\{T,F\}$.\\
{\bf Question:} Is there a truth assignment such that the constants $T$ and $F$ are assigned {\it true} and {\it false}, respectively, and 
for each clause, there is a literal or constant satisfied and a literal or constant falsified by the assignment?

\end{quote}

\noindent In the case of {\sc cNAE-3SAT}, we cannot always obtain a second solution from a first one by
taking its complement. For instance, if an instance of {\sc cNAE-3SAT} contains the clause $a_k\vee b_k\vee T$, then the assignment $a_k=b_k=false$ is valid, while the complementary assignment $a_k=b_k=true$ is not. In fact, we next show that {\sc cNAE-3SAT} is ASP-complete.

\begin{theorem}
{\upshape {\sc cNAE-3SAT}} is ASP-complete.
\end{theorem}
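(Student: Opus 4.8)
The plan is to establish the two requirements of ASP-completeness separately: membership in FNP and an ASP-reduction from a problem already known to be ASP-complete. Membership in FNP is routine, since a candidate truth assignment has size linear in the number of variables, and one checks in polynomial time that $T$ and $F$ receive the values \emph{true} and \emph{false} and that every clause contains both a satisfied and a falsified literal or constant. For the reduction I would start from \textsc{3SAT}, which the preceding theorem tells us is ASP-complete, and exhibit a polynomial-time map $f$ that induces a bijection between the satisfying assignments of a \textsc{3SAT} instance and the solutions of the resulting \textsc{cNAE-3SAT} instance; since ASP-reductions compose, this suffices. The conceptual reason the constants are needed is exactly the complement symmetry noted above for plain \textsc{NAE-3SAT}: fixing $F=\mathit{false}$ pins a \emph{false} witness in every clause and destroys the free complementation that would otherwise make the correspondence two-to-one.

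The key encoding observation is that, in the presence of the constant $F$, the condition ``not all equal'' on $l_1, l_2, l_3, F$ is equivalent to ``at least one of $l_1, l_2, l_3$ is \emph{true}'', i.e.\ to the \textsc{3SAT} clause $(l_1 \vee l_2 \vee l_3)$; the constant $F$ supplies the required falsified entry for free. Thus each clause $C_k$ would first be rendered as a single width-$4$ not-all-equal constraint on $l_1, l_2, l_3, F$. To bring the width down to $3$ I would introduce one fresh variable $a_k$ per clause and replace this constraint by the four \textsc{cNAE-3SAT} clauses on the triples $\{l_1, l_2, a_k\}$, $\{\bar a_k, l_3, F\}$, $\{a_k, l_1, T\}$, and $\{a_k, l_2, T\}$, where $T$ is the second constant.

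The hard part will be arguing that $f$ is genuinely \emph{parsimonious}, i.e.\ that each \textsc{3SAT} solution extends to a \emph{unique} \textsc{cNAE-3SAT} solution, since an ASP-reduction demands a bijection on the full variable sets (auxiliaries included), not merely preservation of satisfiability. The first two triples already encode the clause correctly, and one checks that the all-\emph{false} assignment to $l_1, l_2, l_3$ admits no value of $a_k$; but on their own these two triples leave $a_k$ undetermined in exactly the two cases where $l_3$ is \emph{true} and $l_1 \neq l_2$, giving a two-to-one map. The role of the two triples involving the constant $T$ is precisely to pin $a_k$ in those cases: $\{a_k, l_1, T\}$ and $\{a_k, l_2, T\}$ forbid $a_k=\mathit{true}$ whenever $l_1$ or $l_2$ is \emph{true}, while being satisfied automatically whenever $a_k=\mathit{false}$, so they remove exactly the spurious extension without disturbing the one genuinely needed case, namely $l_1=l_2=\mathit{false}$ and $l_3=\mathit{true}$, which forces $a_k=\mathit{true}$.

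I would then finish by a short case check confirming that, for every satisfying assignment of the \textsc{3SAT} instance, the four gadget clauses of each $C_k$ are simultaneously satisfiable by one and only one choice of $a_k$, and conversely that every \textsc{cNAE-3SAT} solution restricts to a satisfying assignment of the \textsc{3SAT} instance with the $a_k$ uniquely determined. Together with the obvious polynomial bound on the size of $f$, this yields the desired bijection and hence the ASP-reduction, establishing that \textsc{cNAE-3SAT} is ASP-complete. The only delicate point is the uniqueness bookkeeping for the auxiliary variables; everything else is mechanical.
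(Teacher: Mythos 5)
Your proposal is correct and follows essentially the same route as the paper: an ASP-reduction from \textsc{3SAT} using exactly the same four-clause gadget $(l_1\vee l_2\vee a_k)\wedge(\bar a_k\vee l_3\vee F)\wedge(a_k\vee l_1\vee T)\wedge(a_k\vee l_2\vee T)$ with one fresh auxiliary variable per clause, together with the observation that the auxiliary is forced to equal the negation of $l_1\vee l_2$, which yields the required bijection. Your case analysis of why the two $T$-clauses are needed to pin down the auxiliary is in fact slightly more explicit than what the paper writes.
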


\begin{proof}
Regarding {\sc cNAE-3SAT} as a function problem (see the remark earlier in this section), it is easily seen that deciding if a truth assignment to an instance of {\sc cNAE-3SAT} satisfies this instance  takes polynomial time. Hence, {\sc cNAE-3SAT} is in FNP. Now, let $\psi$ be an instance of the APS-complete problem 3SAT over the variables $V=\{x_1,\ldots,x_n\}$. To show that {\sc cNAE-3SAT} is ASP-complete we reduce $\psi$ to an instance $\psi'$ of {\sc cNAE-3SAT} over an expanded set of variables $V' = V \cup \{x_{x+1},\ldots,x_{n+m}\}$, where $m$ is the number of clauses in $\psi$. Let $x_{n+k}$ be a new variable chosen for the clause $(a_k\vee b_k\vee c_k)$ of $\psi$. We obtain $\psi'$ from $\psi$ by replacing each clause $(a_k\vee b_k\vee c_k)$ with the following 4 clauses:
$$(a_k\vee b_k\vee x_{n+k})\wedge(\bar x_{n+k}\vee c_k\vee F)\wedge(a_k\vee x_{n+k}\vee T)\wedge(b_k\vee x_{n+k}\vee T).$$ 
Clearly, this reduction can be done in polynomial time. The size of $\psi'$ is polynomial in the size of $\psi$, and a straightforward check shows that $\psi$ is satisfiable if and only if $\psi'$ is satisfiable. Furthermore, for each clause, $x_{n+k}$ is uniquely determined by the truth values of $a_k$ and $b_k$, since the
reduction makes $\bar x_{n+k}$ equivalent to $a_k\vee b_k$. Hence, 
each truth assignment that satisfies $\psi$ can be mapped to a unique valid truth assignment of $\psi'$. Consequently, the converse; i.e. each truth assignment of $\psi'$ is mapped to a unique truth assignment of $\psi$, also holds. It now follows that the described reduction from 3SAT to {\sc cNAE-3SAT} is an ASP-reduction, thereby completing the proof of this theorem.
\end{proof}


{\bf The {\sc Betweenness} problem.} The decision problem {\sc Betweenness}, that we introduce next, asks whether or not a given finite set can be totally ordered such that a collection of constraints which are given in form of triples is satisfied.

\begin{quote}
{\bf Problem:} {\sc Betweenness}\\
{\bf Instance:} A finite set $A$ and a collection $C$ of ordered triples $(a,b,c)$ of distinct elements from $A$ such that each element of $A$ occurs in at least one triple from $C$. \\
{\bf Question:} Does there exist a {\it betweenness ordering} $f$ of $A$ for $C$; that is a one-to-one function $f:A\rightarrow\{1,2,\ldots,|A|\}$ such that for each triple $(a,b,c)$ in $C$ either $f(a)<f(b)<f(c)$ or $f(c)<f(b)<f(c)$?
\end{quote}
\noindent Loosely speaking, for each triple $(a,b,c)$, the element $b$ lies between $a$ and $c$ in a betweenness ordering of $A$ for $C$.  {\sc Betweenness} has been shown to be NP-complete~\cite{opatrny79}. Similar to NAE-3SAT, notice that if there is a solutions, say $a_1<a_2<\ldots<a_s$, to an instance of {\sc Betweenness}, then there is  always a second solution $a_s<\ldots<a_2<a_1$ to that instance that can clearly be calculated in polynomial time. Therefore, {\sc Betweenness} is not ASP-complete. \\

We next introduce a natural variant of {\sc Betweenness}---called {\sc cBetweenness}---that is ASP-complete (see Section~\ref{sec:Betweenness}). An instance of {\sc cBetweenness} differs from an instance of {\sc Betweenness} in a way that the former contains two constants, say $m$ and $M$, and each betweenness ordering has $m$ as its first and $M$ as its last element. We say that $m$ is the {\it minimum} and $M$ the {\it maximum} of each betweenness ordering.
\begin{quote}
{\bf Problem:} {\sc cBetweenness}\\
{\bf Instance:} A finite set $A$ and a collection $C$ of ordered triples $(a,b,c)$ of distinct elements from $A\cup\{M,m\}$ with $m,M\notin A$ such that each element of $A\cup\{M,m\}$ occurs in at least one triple from $C$. \\
{\bf Question:} Does there exist a {\it betweenness ordering} $f$ of $A\cup\{M,m\}$ for $C$ such that $f$ is a one-to-one function $f:A\cup\{m,M\}\rightarrow\{0,1,2,\ldots,|A|+1\}$ such that for each triple $(a,b,c)$ in $C$ either $f(a)<f(b)<f(c)$ or $f(c)<f(b)<f(c)$, and $f(m)=0$ and $f(M)=|A|+1$?
\end{quote}
Although an instance $\psi$ of {\sc cBetweenness} can have several betweenness orderings, note that if $a_0<a_1<a_2<\ldots<a_{|A|̣+1}$ is a betweenness ordering for $\psi$ with $a_0=m$ and $a_{|A|+1}=M$, then $a_{|A|+1}<\ldots<a_2<a_1<a_0$ is not such an ordering. This is because $m$ must be the minimal element, and $M$ the maximal. 

\subsection{Phylogenetics}

This section provides preliminaries in the context of phylogenetics. For a more thorough overview, we refer the interested reader to Semple and Steel \cite{sempleSteelBook}.\\

{\bf Phylogenetic trees and subtrees.} An {\it unrooted phylogenetic $X$-tree} $T$ is a connected acyclic graph whose leaves are bijectively labeled with elements of $X$ and have degree 1. Furthermore, $T$ is {\it binary} if each non-leaf vertex has degree 3. The set $X$ is the {\it label set} of $T$ and denoted by $L(T)$. 

Now let $T$ be an unrooted phylogenetic $X$-tree, and let $X'$ be a subset of $X$. The {\it minimal subtree} of $T$ that connects all elements in $X'$ is denoted by $T(X')$. Furthermore, the {\it restriction} of $T$ to $X'$, denoted by $T|X'$, is the phylogenetic tree obtained from  $T(X')$ by contracting degree-2 vertices.

Throughout this paper, we will use the terms `unrooted phylogenetic tree' and `phylogenetic tree' interchangeably.\\

{\bf Quartets.} A {\it quartet} is an unrooted binary phylogenetic tree with exactly four leaves. For example, let $q$ be a quartet whose label set is $\{a,b,c,d\}$. We write $ab|cd$ (or equivalently, $cd|ab$) if the path from $a$ to $b$ does not intersect the path from $c$ to $d$. Similarly to the label set of a phylogenetic tree, $L(q)$ denotes the label set of $q$, which is $\{a,b,c,d\}$. Now, let $Q=\{q_1,q_2,\ldots,q_n\}$ be a set of quartets. We write $L(Q)$ to denote the union $L(q_1)\cup L(q_2)\cup\ldots\cup L(q_n)$.\\

{\bf Compatibility.} Let $T$ be a phylogenetic tree whose leaf set is a superset of $L(q)$. Then $T$ {\it displays} $q$ if $q$ is isomorphic to $T|L(q)$. Furthermore, $T$ displays a set $Q$ of quartets if $T$ displays each element of $Q$, in which case $Q$ is said to be {\it compatible}. Lastly, ${<}Q{>}$ denotes the set of all unrooted binary phylogenetic trees that display $Q$ and whose label set is precisely $L(Q)$. 

The concept of compatibility leads to the following decision problem, which has been shown to be NP-complete~\cite{steel92}.

\begin{quote}
{\bf Problem:} {\sc Quartet Compatibility}\\
{\bf Instance:} A set $Q$ of quartets.\\
{\bf Question:} Is $Q$ compatible?
\end{quote}

The next problem has originally been posed by Steel~\cite{quartetChallenge} and is a natural extension of {\sc Quartet Compatibility}.

\begin{quote}
{\bf Problem:} {\sc Quartet Challenge}\\
{\bf Instance:} A binary phylogenetic $X$-tree $T$ and a set $Q$ of quartets on $X$ such that $T$ displays $Q$.\\
{\bf Question:} Is $T$ the unique phylogenetic $X$-tree that displays $Q$?
\end{quote}

\noindent{\bf Remark.} Given a binary phylogenetic $X$-tree $T$ and a set $Q$ of quartets on $X$ such that $T$ displays $Q$, deciding whether another solution exists is the complement question of the {\sc Quartet Challenge}. That is, a {\it no} answer to an instance of the first question translates to a {\it yes} answer to the same instance of the {\sc Quartet Challenge} and vice versa.\\

We end this section by highlighting the relationship between the two problems {\sc Quartet Challenge} and {\sc Quartet Compatibility}.  Let $Q$ be an instance of {\sc Quartet Compatibility}. If $Q$ is compatible, then there exists an unrooted phylogenetic tree $T$ with label set $L(Q)$ that displays $Q$. This naturally leads to the question whether $T$ is the unique such tree on $L(Q)$, which is exactly the question of the {\sc Quartet Challenge}. To make progress towards resolving the complexity of this challenge, we will first show that {\sc Quartet Compatibility} is ASP-complete. Thus, by ~\cite[Theorem 3.4]{yato03}, the decision problem that corresponds to {\sc Quartet Compatibility}, i.e. asking whether another solution exists, is NP-complete. Now, recalling the last remark, this in turn implies that the {\sc Quartet Challenge} is coNP-complete.
Lastly, note that if $T$ is the unique tree with label set $L(Q)$ that displays $Q$, then $T$ is binary.

\section{ASP-Completeness Results}
\label{sec:results}

\subsection{ASP-Reduction for {\sc cBetweenness}}\label{sec:Betweenness}

In this section, we focus on establishing an ASP-reduction from {\sc cNAE-3SAT} 
to {\sc cBetweenness}.  This reduction gives that 
{\sc cBetweenness} is ASP-complete (see Theorem~\ref{t:cBetweenness}). We note that, by a similar argument, we
can also reduce {\sc NAE-3SAT} to {\sc Betweenness}.

Let $\psi$ be an instance of {\sc cNAE-3SAT} 
consisting of a conjunction of 3-literal clauses
$\{a_k \vee b_k \vee c_k: 1\le k\le l\}$,
on the set of variables $\{x_1,\ldots,x_n\}$ and a subset of the constants $\{T,F\}$. 
Let $\{-x_1,-x_2,\ldots,-x_n,x_1,x_2,\ldots,x_n\}$ be the set of corresponding literals of $\psi$, where $-x_i$ (instead of $\bar x_i$) denotes the negation of $x_i$. We next build an instance $\psi'$ of {\sc cBetweenness}. 

In the following, we think of an ordering as
setting symbols on a line segment where the symbol $X$ is
the center; i.e. $X$ represents `$0$'. Furthermore, we refer to the region left of $X$ as the {\it negative side} and to the region right of $X$ as the {\it positive side} of the line segment.
For every variable $x_i$ with $i\in\{1,2,\ldots,n\}$, we preserve the two symbols $x_i$ and
$-x_i$ and introduce two new symbols $M_i$ and $-M_i$ which are auxiliary symbols that mark the
midpoints between the $x_i$ and $-x_i$ symbols (see Figure~\ref{fig:order}). Moreover, if $T$ or $F$ is contained in any clause of $\psi$, then we also introduce the symbol $M$ and $m$, respectively, such that $M$ represents the largest and $m$ the smallest value on the line segment (for details, see below).
Intuitively, if the literal $x_i$ is assigned to {\it true}, then the symbol $x_i$ is assigned to the positive side and the symbol $-x_i$ to the negative side. Otherwise, if the literal $x_i$ is assigned to {\it false}, then the symbol $x_i$ is assigned to the negative side and the symbol $-x_i$ to the positive side.

The following triples fix $X$ as `$0$'.  For every $i\le n$, 
they put $x_i$ and $-x_i$ on either side of $X$. They also put 
$M_i$ and $-M_i$ on either side of $X$. 

\begin{equation}\label{eq1}
\begin{array}{ll}
(-x_i, X, x_i) &\mbox{ for all $i$ such that $1\le i\le n$}  \\
(-M_i, X, M_i) &\mbox{ for all $i$ such that $1\le i\le n$}
\end{array}
\end{equation}
The next set of triples put an order between the $x_i$ and $M_i$ symbols. They establish that
$|x_i|<|x_{i+1}|$ for every $i, 1\le i <n$, where we interpret $|\cdot |$ to be the distance
to $X$ (i.e.~`$0$') under the induced ordering.
Also, they fix $M_i$ and $-M_i$ as middle points
between $x_i$ (or $-x_i$), and $x_{i-1}$ (or $-x_{i-1}$) (see Figure~\ref{fig:order}).

\begin{equation}\label{eq2}
\begin{array}{lll}
(-M_{i}, x_{i-1}, M_{i})& (-M_{i}, -x_{i-1}, M_{i}) &\mbox{ for all $i$ such that $2\le i\le n$}\\
(-x_i, M_i, x_i)& (-x_i, -M_i, x_i) &\mbox{ for all $i$ such that $1\le i\le n$} \\
\end{array}
\end{equation} 
We will require that either both $x_i$ and $M_i$ are on the positive side, or on the negative side.
\begin{equation}\label{eq3}
\begin{array}{lll}
(x_i, X, -M_i)&  (-x_i, X, M_i)& \mbox{ for all $i$ such that $1\le i\le n$}\\
\end{array}
\end{equation}

\begin{figure}
\begin{center}
\includegraphics[height=1.5in]{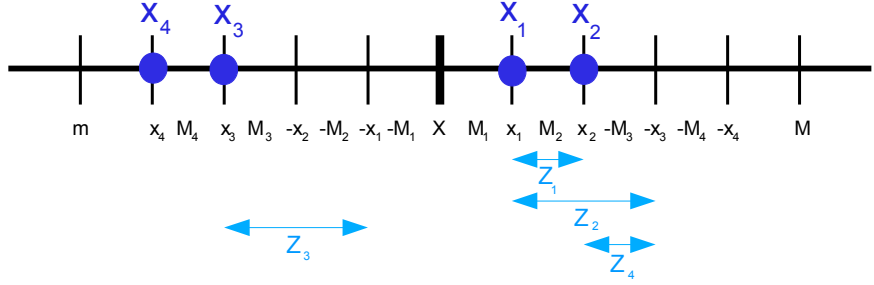}
\end{center}
\caption{\small A mapping of the first set of clauses under the truth assignment $\sigma_1$
from Figure~\ref{fig:ex1}.  The top blue variables are the initial variables from
the instance of {\sc cNAE-3SAT}.  For a given truth assignment, the ordering of the 
$x_i$ and $M_i$ symbols is fixed by the triples in Equation~\ref{eq1}-\ref{eq4}.  However, 
without additional triples (and  auxiliary symbols), the $Z_i$ symbols have overlapping
ranges (as indicated by the lines with arrows) and, as such, several possible orderings
for a single truth assignment.}
\label{fig:order}
\end{figure}

\paragraph{\bf Original Encoding:}
Given a clause $(a_k \vee b_k \vee c_k)$ of $\psi$, we assume for the remainder of Section~\ref{sec:Betweenness} that $a_k\in\{-x_i,x_i\}$, $b_k\in\{-x_{i'},x_{i'}\}$, and $c_k\in\{-x_{i''},x_{i''}\}$ such that $i<i'<i''$. Since there are at most two constants per clause, we also assume that $a_k$ is never a constant and that, if $b_k$ is a constant, then $c_k$ is a constant. Lastly, we assume that the whole set of clauses of $\psi$ is ordered lexicographically.

To guarantee the `not-all-equal' condition, we use the original 
encoding of Opatrny \cite{opatrny79}. For each clause $a_k\vee b_k\vee c_k$ with $k\in\{1,2,\ldots,l\}$,
we add a new symbol,
$Z_k$, where $l$ is the number of clauses in $\psi$.
We add the following triples that correspond to the original encoding of
Opatrny \cite{opatrny79}:  
\begin{equation}\label{eq4}
\begin{array}{lll}
(a_k, Z_k, b_k) & (c_k, X, Z_k) &\mbox{ for all $k$ such that $1\le k\le l$.}
\end{array}
\end{equation}
If a clause contains a constant $F$ or $T$, it gets substituted by 
$m$ or $M$, respectively, in the triple. For instance, a clause $a_k \vee b_k \vee T$ generates the two triples $(a_k, Z_k, b_k)$, $(T, X, Z_k)$. These triples
force $Z_k$ to be on the negative side.

The triples of Equation \ref{eq4} say that at least one of the literals of the $k^{th}$ clause is assigned to
{\it true} and one to {\it false} (see Figure~\ref{fig:order}). 
When viewed in terms of the possible truth assignment to the variables
in the original clause, the truth assignment $a_k=b_k=c_k=true$
 is eliminated since for all three initial literals to 
be assigned {\it true}, both $c_k$ and $Z_k$ would be assigned to the 
positive side of $X$ violating the second triple of Equation~\ref{eq4}.  By a similar argument, the truth assignment $a_k=b_k=c_k=false$
is eliminated.  This leaves six 
other possible satisfying truth assignments.

The triples of Equation~\ref{eq1}-\ref{eq4}
are not sufficient to prove ASP-completeness of {\sc cBetweenness} since different orderings of {\sc cBetweenness} can be 
reduced to the same satisfying truth assignment of the {\sc cNAE-3SAT} instance
(see Figure~\ref{fig:order}).
This happens since the  symbols in $\{Z_1,Z_2,\ldots,Z_l\}$ are not fixed with respect
to the $x_i$ and $-x_i$ symbols or to one another. In what follows, we will 
show how to fix the ordering of these symbols.

\paragraph{\bf Fixing Auxiliary Symbols in the Ordering of the Initial Symbols:}
We next introduce new symbols $-Z_1,\ldots,-Z_l$, where $l$ is the number of clauses in $\psi$.
From now on, we will establish a set of triples for every clause. 
We assume that the triples up to the $(k-1)^{th}$ clause have been defined, and now we
will give the set of triples of the $k^{th}$ clause. The intuition is that 
we want $-Z_k$ to be opposite to $Z_k$ with respect to $X$. We add:
\begin{equation}\label{eq5}
\begin{array}{ll}
(-a_k, -Z_k, -b_k) &  (-c_k, X, -Z_k) \\
\end{array}
\end{equation}

To fix the position of $Z_k$ and $-Z_k$ for every $1\le k\le l$ relative to the $x_i$ and $M_i$ symbols, we need more
triples. Recall that $a_k\in\{-x_i,x_i\}$. We
add the following triples saying that $|Z_k|<|M_{i+1}|$.
\begin{equation}\label{eq6}
\begin{array}{ll}
(-M_{i+1}, Z_k, M_{i+1})& (-M_{i+1}, -Z_k, M_{i+1}) \\
\end{array}
\end{equation}                                                                                                
The following triples say that $|M_i|<|Z_k|$.
\begin{equation}\label{eq7}
\begin{array}{ll}
(-Z_k, M_i, Z_k)& (-Z_k, -M_i, Z_k) 
\end{array}
\end{equation}

These triples eliminate the interval $[-M_i,M_i]$ for the positions of $Z_k$ and $-Z_k$, respectively.
Since the $M_i$ symbols occur as midpoints between the $x_i$ symbols, we have restricted each $Z_k$ to be `near'
$a_k$ or $-a_k$; i.e. $Z_k$ is `near' $-x_i$ or $x_i$ on the line segment. As a consequence of Equations~\ref{eq6}, we have
$
|M_i| < |Z_k| < |x_i| \mbox{ or } |x_i| < |Z_k| < |M_{i+1}|.
$
At this point, we have the symbols $Z_k$ and $-Z_k$ in a tight interval between consecutive
positions, except for the truth assignments $a_k=c_k=false$ and $b_k=true$, and $a_k=c_k=true$ and $b_k=false$. In these cases, both
$|M_i| < |Z_k| < |x_i| \mbox{ and } |x_i| < |Z_k| < |M_{i+1}|$, are still possible. See Figure~\ref{fig:orderZtriples}
for an example. We therefore add:
\begin{equation}\label{eq8}
\begin{array}{ll}
(Z_k, -a_k, c_k) & (-Z_k, a_k, -c_k) 
\end{array}
\end{equation}
which fixes the $Z_k$ such that $|x_i| < |Z_k| < |M_{i+1}|$.

\begin{figure}
\begin{center}
\includegraphics[height=1.25in]{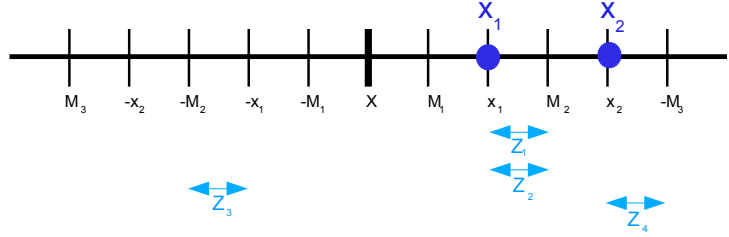}
\end{center}
\caption{\small Equations~\ref{eq6}-\ref{eq7} limit the ranges of the $Z_k$ symbols but
do not completely fix the ordering.  Continuing the example from Figure~\ref{fig:ex1}, we show
the possible ordering for the $\sigma_1$ truth assignment.  Note that the order of
$Z_1$ and $Z_2$ is not fixed and additional triples (and variables) are needed.}
\label{fig:orderZtriples}
\end{figure}

\paragraph{\bf Fixing the Order of Auxiliary Symbols Among Themselves:}
With the $Z_k$ and $-Z_k$ being fixed around $x_i$ and $-x_i$, respectively, and recalling that $a_i\in\{-x_i,x_i\}$, we need to address the
case where the first literal of several clauses is either $x_i$ or $-x_i$. Note that these corresponds to consecutive clauses of $\psi$ since $\psi$ is ordered lexicographically.  In this case, we will use extra auxiliary symbols
to nest the $Z_k$ symbols around $x_i$ and $-x_i$. The intuition for these
clauses is to partition the consecutive intervals 
into distinct regions for the different $Z_k$ and $-Z_k$ symbols (see Figure~\ref{fig:orderZ}).

Let the $k^{th}$ clause be $a_k\vee b_k \vee c_k$ and assume that it is not the only 
clause starting by $x_i$ or $-x_i$. We will add the symbols $L_k$, $-L_k$, $U_k$, $-U_k$ to create
4 new subintervals (two around $x_i$ and two around $-x_i$) where we will fix $Z_k$ and 
$-Z_k$. Also, we will add the new symbols
$Y_k$ to be the mirror image of $Z_k$ with respect to $x_i$ (or $-x_i$), and
$-Y_k$ to be the mirror image of $-Z_k$ with respect to $x_i$ (or $-x_i$).

The following triples express that the symbols $Z_k$, $Y_k$, $L_k$ and $U_k$ are all
placed on the positive side, or in the negative side, and similarly for the negated symbols $-Z_k$, $-Y_k$, $-L_k$ and $-U_k$.
\begin{equation}\label{eq9}
\begin{array}{ll}
(Z_k, X, -Y_k) & (-Z_k, X, Y_k) \\
(Z_k, X, -L_k) & (-Z_k, X, L_k) \\
(Z_k, X, -U_k) & (-Z_k, X, U_k) \\
\end{array}
\end{equation}




First, if the $k^{th}$ clause is the first clause in the ordering that starts by $x_i$ or $-x_i$,
we need the following triples saying that $|x_i|<|U_k|$ and $|L_k|<|x_i|$.
\begin{equation}\label{eq11}
\begin{array}{ll}
(-U_k,x_i,U_k)& (-U_k, -x_i, U_k) \\
(-x_i, L_k, x_i)& (-x_i, -L_k, x_i) \\
\end{array}
\end{equation}

Second, if the $k^{th}$ clause is not the first one in the 
ordering that has $x_i$ or $-x_i$ as its first literal (i.e. $a_{k-1}, a_k\in\{-x_i,x_i\}$), we proceed by introducing the following triples. By induction, 
we assume that for the previous $(k-1)^{st}$ clause, the subintervals were created
using $L_{k-1}$
and $U_{k-1}$ with the corresponding $Z_{k-1}$ and $Y_{k-1}$ in them.
In this case, we need the triples saying that $|Z_{k-1}|<|U_k|$ and
$|Y_{k-1}|< |U_k|$ (instead of the first two triples of Equation 10):
\begin{equation}\label{eq11a}
\begin{array}{ll}
(-U_k, Z_{k-1}, U_k) & (-U_k, -Z_{k-1}, U_k) \\
(-U_k, Y_{k-1}, U_k) & (-U_k, -Y_{k-1}, U_k) \\
\end{array}
\end{equation}
and the triples saying that $|L_k|<|Z_{k-1}|$ and $|L_k|<|Y_{k-1}|$ (instead of the
last two triples of Equation 10).
\begin{equation}\label{eq12}
 \begin{array}{ll}
(-Z_{k-1}, L_k, Z_{k-1}) & (-Z_{k-1}, -L_k, Z_{k-1}) \\
(-Y_{k-1}, L_k, Y_{k-1}) & (-Y_{k-1}, -L_k, Y_{k-1})\\
\end{array}
\end{equation}

At this point, we have $Z_{k-1}$ either between two consecutive $L$ symbols,
or between two consecutive $U$ symbols. The same is true for $-Z_{k-1}$,
$Y_{k-1}$ and $-Y_{k-1}$.




Now, we need triples to locate $Z_k$, $-Z_k$, $Y_k$ and $-Y_k$ in the newly created intervals.
We add triples saying that $L_k$ and $U_k$ are between $Y_k$ and $Z_k$, and
similarly for the negated symbols.
\begin{equation}\label{eq13}
\begin{array}{ll}
(Z_k, U_k, Y_k) & (Z_k, L_k, Y_k) \\
(-Z_k, -U_k, -Y_k) & (-Z_k, -L_k, -Y_k) \\
\end{array}
\end{equation}
Since the $L$ and $U$ symbols are nested around $x_i$ and $-x_i$, this
puts the $Y_k$ and $Z_k$ symbols in their corresponding region around $x_i$ and $-x_i$.

Finally, we need to keep the $Z_k$, $-Z_k$, $Y_k$, and $-Y_k$ from being too far from $x_i$ or $-x_i$ and 
intruding in the region of $x_{i+1}$, $-x_{i+1}$, $x_{i-1}$, or $-x_{i-1}$.  This was done already for
the $Z_k$ and $-Z_k$ symbols in Equations \ref{eq6} and \ref{eq7}, and now we will do it for the $Y_k$ and $-Y_k$ symbols.
This is important in the case that the $k^{th}$
clause is the last one containing $x_i$ or $-x_i$  as its first literal. To do this, we need additional 
clauses saying that $|M_i|<|Y_k|$ and $|Y_k|<|M_{i+1}|$.

\begin{equation}\label{eq14}
\begin{array}{ll}
(Y_k, M_{i}, -Y_k) & (Y_k, -M_{i}, -Y_k) \\
(M_{i+1}, Y_{k}, -M_{i+1}) & (M_{i+1}, -Y_{k}, -M_{i+1}) \\
\end{array}
\end{equation}

\begin{theorem}\label{t:cBetweenness}
{\sc {\sc cBetweenness}} is ASP-complete.
\end{theorem}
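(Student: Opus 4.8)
The plan is to prove two things: that the function-problem version of {\sc cBetweenness} lies in FNP, and that the construction given by Equations~\ref{eq1}--\ref{eq14} is an ASP-reduction from {\sc cNAE-3SAT}, which is ASP-complete by the preceding theorem. Membership in FNP is immediate: given an instance $\psi'$ and a candidate map $f$, one checks in polynomial time that $f$ is a bijection onto $\{0,\ldots,|A|+1\}$, that $f(m)=0$ and $f(M)=|A|+1$, and that every triple in $C$ is satisfied. For the reduction itself I would first record that $\psi'$ has $O(n+l)$ symbols and $O(n+l)$ triples and is computable in polynomial time, so $f$ is a genuine polynomial-time reduction. The entire weight of the argument then rests on exhibiting, for each instance $\psi$, a bijection between the satisfying truth assignments of $\psi$ (those giving $T$ the value true, $F$ the value false, and leaving no clause all-equal) and the betweenness orderings of $\psi'$.

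I would build this bijection from two explicit maps and then show they are mutually inverse. In the direction from orderings to assignments, given a betweenness ordering $g$ of $\psi'$ I set $\sigma(x_i)=\mathit{true}$ exactly when $g(x_i)>g(X)$; the triples $(-x_i,X,x_i)$ of Equation~\ref{eq1} guarantee that $x_i$ and $-x_i$ straddle $X$, so $\sigma$ is well defined and $-x_i$ carries the complementary value. Because $f(m)=0$ and $f(M)=|A|+1$ force $m$ and $M$ onto the negative and positive extremes, the substitution of $m$ for $F$ and $M$ for $T$ makes the constants inherit the values false and true, and the triples $(a_k,Z_k,b_k),(c_k,X,Z_k)$ of Equation~\ref{eq4} rule out exactly the all-true and all-false patterns of each clause; hence $\sigma$ is satisfying. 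Note that this use of $m$ and $M$ breaks the reflection symmetry that forces every plain {\sc Betweenness} instance to have a second (reversed) solution, and is precisely what makes the stronger conclusion possible here. In the opposite direction, given a satisfying $\sigma$ I would place $X$ at the centre, send each $x_i$ and $-x_i$ to the side dictated by $\sigma$ at distance increasing with $i$, position every auxiliary symbol at the location prescribed by the construction, and then verify that all triples of Equations~\ref{eq1}--\ref{eq14} hold. That $\sigma\mapsto g\mapsto\sigma$ returns the original assignment is clear from the definitions.

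\textbf{Uniqueness is the crux.} The bijection closes only if each satisfying $\sigma$ admits a \emph{unique} ordering $g$ realising it, that is, if the auxiliary symbols leave no residual freedom. I would argue this by a layered induction. Equation~\ref{eq2} forces the strict ``magnitude'' chain $|x_{i-1}|<|M_i|<|x_i|$, pinning the $M_i$ markers between consecutive literal levels, while Equation~\ref{eq3} ties each $M_i$ to the same side as $x_i$; together with Equation~\ref{eq1} this fixes the coarse layout up to the placement of the $Z_k$ and $-Z_k$ symbols. Equations~\ref{eq6} and~\ref{eq7} then trap each $Z_k$ in the band $|x_i|<|Z_k|<|M_{i+1}|$ with $a_k\in\{x_i,-x_i\}$, and Equation~\ref{eq8} removes the last two-way ambiguity that survives for the mixed truth patterns noted in the text. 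The remaining and most delicate case is that of several clauses sharing the first literal $x_i$; here I would induct over these consecutive clauses, using Equations~\ref{eq9}--\ref{eq14} to show that the symbols $L_k,U_k$ cut out nested subintervals into which $Z_k$ and its mirror image $Y_k$ are forced, so that the clauses are laid out in a strict, fully determined order around $x_i$ and $-x_i$.

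I expect the nesting step to be the main obstacle. One must verify simultaneously that the triples of Equations~\ref{eq9}--\ref{eq14} are \emph{consistent}, so that the claimed ordering really exists for every satisfying $\sigma$, and \emph{rigid}, so that no other ordering realises the same $\sigma$, tracking the four families $Z_k,-Z_k,Y_k,-Y_k$ together with the bracketing symbols $L_k,U_k$ across the induction on consecutive same-first-literal clauses; the boundary triples of Equation~\ref{eq14} must also be checked to keep the final such block from intruding on the $x_{i\pm 1}$ levels. Once rigidity is established, the two maps are inverse bijections, the reduction is parsimonious in the strong (ASP) sense, and the theorem follows.
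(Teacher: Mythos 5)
Your proposal follows essentially the same route as the paper's proof: the same two explicit maps (assignments to orderings via a placement of all symbols on a line segment centred at $X$, and orderings to assignments via the side of $X$), the same observation that $m$ and $M$ break the reversal symmetry, and the same final rigidity argument that Equations~\ref{eq1}--\ref{eq4} fix the literal symbols and Equations~\ref{eq5}--\ref{eq14} fix the auxiliary ones. The one step you defer as ``the main obstacle''---consistency and rigidity of the nested $L_k,U_k,Z_k,Y_k$ intervals for consecutive clauses sharing a first literal---is exactly what the paper discharges by writing down an explicit six-case formula for $\phi_{\sigma}$ on those symbols (offsets $\pm\frac{p}{2l}$ and $\pm\frac{2p+1}{4l}$ around $\phi_{\sigma}(a_k)$ or $-\phi_{\sigma}(a_k)$), so your plan is correct but would need that concrete placement, or an equivalent induction, to be complete.
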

\begin{proof}
Given a 3CNF instance of {\sc cNAE-3SAT} $\{a_k \vee b_k \vee c_k: 1\le k\le l\}$,
on the set of variables $\{x_1,\ldots,x_n\}$ and a subset of the constants $\{T,F\}$, we can create an instance of {\sc cBetweenness} 
by using Equations 1-14.
We show how to build a bijection between satisfying truth assignments for the formula instance
and betweenness orderings that are solution to the {\sc cBetweenness} instance.

To define the total betweenness orderings, 
we consider the line segment $[-n-1,n+1]$ and define a mapping $\phi_{\sigma}$ for 
every truth assignment $\sigma$ of the variables $x_1, \ldots, x_n$.  
While the domain of $\sigma$ is $\{x_1,\ldots,x_n\}$, the domain of $\phi_{\sigma}$, $S = dom(\phi_{\sigma})$, 
is contained in 
$$
\begin{array}{l}
\{x_1,\ldots,x_n, -x_1, \ldots, -x_n, 
m, M, X, 
M_1,\ldots,M_n, -M_1,\ldots,-M_n,\\
Z_1,\ldots, Z_l, -Z_1,\ldots, -Z_l, 
Y_1,\ldots, Y_l, -Y_1,\ldots, -Y_l,\\
L_1,\ldots, L_l, -L_1,\ldots, -L_l,
U_1,\ldots, U_l, -U_1,\ldots, -U_l
\}.
\end{array}
$$

The mapping $\phi_{\sigma}$ of truth assignments to orderings can now be defined the following way.
First as a general property of $\phi_{\sigma}$, let us say that $\phi_{\sigma}(-x)=-\phi_{\sigma}(x)$,
for every symbol $x$ of the instance.
If $\sigma(x_i)=T$, then $\phi_{\sigma}(x_i)=i$ (and $\phi_{\sigma}(-x_i)=-i$), 
and otherwise $\phi_{\sigma}(x_i)=-i$ (and $\phi_{\sigma}(-x_i)=i$). 
At this point the symbols, $x_i$ and $-x_i$ get fixed in the interval $[-n,n]$ on
opposite sides of $0$. Note that the symbol $X$ represents $0$ in the ordering, that is, $\phi_{\sigma}(X) = 0$.
This part of the definition of $\phi_{\sigma}$ fulfills Equation~\ref{eq1}.
Next, we put $m$ below $-n$, and $M$ above $n$ (i.e.~$\phi_{\sigma}(m) = -n-1$ and $\phi_{\sigma}(M) = n+1$), 
as is required for the definition of a betweenness ordering for {\sc cBetweenness}.

Next, we define the ordering function for the $M_i$ symbols. If $\phi_{\sigma}(x_i)>0$,
then $\phi_{\sigma}(M_i)=\phi_{\sigma}(x_i)-1/2$ and $\phi_{\sigma}(-M_i)=\phi_{\sigma}(-x_i)+1/2$. If $\phi_{\sigma}(x_i)<0$,
then $\phi_{\sigma}(M_i)=\phi_{\sigma}(x_i)+1/2$ and $\phi_{\sigma}(-M_i)=\phi_{\sigma}(-x_i)-1/2$. 
This definition fulfills Equations~\ref{eq2} and \ref{eq3}.

Now, for every $k,1\le k\le l$, we have to fix the position of every $Z_k$ under the mapping $\phi_{\sigma}$.
Recall that each clause $a_k\vee b_k\vee c_k$ is ordered from smaller to larger index. 
We begin with the case where only this clause begins with the variable represented by $a_k$.
Then, no additional auxiliary variables (i.e.~$Y_k$, $L_k$, and $U_k$) were introduced, and 
only $Z_k$ has to be placed in the order.  There are six cases based on possible truth values
assigned to $a_k$, $b_k$, and $c_k$ (as noted above, $a_k=b_k=c_k=false$ and $a_k=b_k=c_k=true$
are no satisfying truth assignments for an instance of {\sc cNAE-3SAT}. If
$\sigma(a_k)=\sigma(c_k)$ and $\sigma(b_k)$ has the opposite value (i.e.~the cases of $a_k=c_k=false$ and $b_k=true$,
and $a_k=c_k=true$ and $b_k=false$), 
then $Z_k$ will be set around $-\phi_{\sigma}(a_k)$, since the Equations~\ref{eq5} and \ref{eq6},
the fact that the index of $b_k$ is bigger than that of $a_k$,
and the fact that we have the triple $(Z_k, X, c_k)$ of Equation~\ref{eq3}. 
Note that the triples in Equation~\ref{eq7} fix the positions of $\phi_{\sigma}(Z_k)$ to one 
side of $-\phi_{\sigma}(a_k)$.
For the remaining cases where 
$\sigma(a_k)\neq\sigma(c_k)$ we define $\phi_{\sigma}$ to place $Z_k$ near $\phi_{\sigma}(a_k)$,
again by Equations~\ref{eq3}, \ref{eq5}, and \ref{eq6}. The general definition is as follows:

$$
\phi_{\sigma}(Z_k)=\left\{
\begin{array}{ll}
\phi_{\sigma}(a_k) + \frac{1}{4}     &\mbox{if $\sigma(a_k)\neq\sigma(c_k)$ and $\sigma(b_k)=true$}\\
\phi_{\sigma}(a_k) -\frac{1}{4}      &\mbox{if $\sigma(a_k)\neq\sigma(c_k)$ and $\sigma(b_k)=false$}\\
\phi_{\sigma}(-a_k) + \frac{1}{4}    &\mbox{if $\sigma(a_k)=\sigma(c_k)$ and $\sigma(b_k)=true$}\\
\phi_{\sigma}(-a_k)  -\frac{1}{4}    &\mbox{if $\sigma(a_k)=\sigma(c_k)$ and $\sigma(b_k)=false$}.
\end{array}
\right.
$$

If there is only one clause beginning with a given literal or its negation, then by the triples
in Equations~\ref{eq1} to \ref{eq9}, all the symbols (as well as 
their negations) are fixed.

There might be a number of clauses with the same variable in the first position
of the disjunction. The positions of these respective $Z_k$'s have to be fixed, as well
as the auxiliary variables, $L_k$, $U_k$ and $Y_k$'s.  By the triples in Equations \ref{eq12}, \ref{eq13}
and \ref{eq14} the auxiliary variables $L_k$ and $U_k$ (and the negative ones)
are nested around $\phi_{\sigma}(a_k)$ and $\phi_{\sigma}(-a_k)$ forming intervals, and $Z_k$ and $Y_k$
are set inside intervals of consecutive $L_k$'s or consecutive $U_k$'s (see Figure~\ref{fig:orderZ}).  
We assume that this is the $p^{th}$ clause that begins with the same variable.  Then,
as above, the exact placement of these variables depends on $\sigma$. Define $\phi_\sigma$ as follows:
\begin{itemize}
\itemsep 0pt
    \item{Case 1:} $\sigma(a_k) = false$, $\sigma(b_k) = false$, $\sigma(c_k) = true$:
        $$
         \begin{array}{ll}
            \phi_{\sigma}(L_k) = \phi_{\sigma}(a_k) + \frac{p}{2l}& 
            \phi_{\sigma}(U_k) = \phi_{\sigma}(a_k) - \frac{p}{2l}\\
            \phi_{\sigma}(Z_k) = \phi_{\sigma}(a_k) - \frac{2p+1}{4l}&
            \phi_{\sigma}(Y_k) = \phi_{\sigma}(a_k) + \frac{2p+1}{4l}\\
        \end{array}
        $$
    \item{Case 2:} $\sigma(a_k) = false$, $\sigma(b_k) = true$, $\sigma(c_k) = false$:
        $$
         \begin{array}{ll}
            \phi_{\sigma}(L_k) = -\phi_{\sigma}(a_k) - \frac{p}{2l}& 
            \phi_{\sigma}(U_k) = -\phi_{\sigma}(a_k) + \frac{p}{2l}\\
            \phi_{\sigma}(Z_k) = -\phi_{\sigma}(a_k) + \frac{2p+1}{4l}&
            \phi_{\sigma}(Y_k) = -\phi_{\sigma}(a_k) - \frac{2p+1}{4l}\\
        \end{array}
        $$
    \item{Case 3:} $\sigma(a_k) = false$, $\sigma(b_k) = true$, $\sigma(c_k) = true$:
        $$
         \begin{array}{ll}
            \phi_{\sigma}(L_k) = \phi_{\sigma}(a_k) + \frac{p}{2l}& 
            \phi_{\sigma}(U_k) = \phi_{\sigma}(a_k) - \frac{p}{2l}\\
            \phi_{\sigma}(Z_k) = \phi_{\sigma}(a_k) + \frac{2p+1}{4l}&
            \phi_{\sigma}(Y_k) = \phi_{\sigma}(a_k) - \frac{2p+1}{4l}\\
        \end{array}
        $$
    \item{Case 4:} $\sigma(a_k) = true$, $\sigma(b_k) = false$, $\sigma(c_k) = false$:
        $$
         \begin{array}{ll}
            \phi_{\sigma}(L_k) = \phi_{\sigma}(a_k) - \frac{p}{2l}& 
            \phi_{\sigma}(U_k) = \phi_{\sigma}(a_k) + \frac{p}{2l}\\
            \phi_{\sigma}(Z_k) = \phi_{\sigma}(a_k) - \frac{2p+1}{4l}&
            \phi_{\sigma}(Y_k) = \phi_{\sigma}(a_k) + \frac{2p+1}{4l}\\
        \end{array}
        $$
    \item{Case 5:} $\sigma(a_k) = true$, $\sigma(b_k) = false$, $\sigma(c_k) = true$:
        $$
         \begin{array}{ll}
            \phi_{\sigma}(L_k) = -\phi_{\sigma}(a_k) + \frac{p}{2l}& 
            \phi_{\sigma}(U_k) = -\phi_{\sigma}(a_k) - \frac{p}{2l}\\ 
            \phi_{\sigma}(Z_k) = -\phi_{\sigma}(a_k) - \frac{2p+1}{4l}&
            \phi_{\sigma}(Y_k) = -\phi_{\sigma}(a_k) + \frac{2p+1}{4l}\\
        \end{array}
        $$
    \item{Case 6:} $\sigma(a_k) = true$, $\sigma(b_k) = true$, $\sigma(c_k) = false$:
        $$
         \begin{array}{ll}
            \phi_{\sigma}(L_k) = \phi_{\sigma}(a_k) - \frac{p}{2l}& 
            \phi_{\sigma}(U_k) = \phi_{\sigma}(a_k) + \frac{p}{2l}\\ 
            \phi_{\sigma}(Z_k) = \phi_{\sigma}(a_k) + \frac{2p+1}{4l}&
            \phi_{\sigma}(Y_k) = \phi_{\sigma}(a_k) - \frac{2p+1}{4l}\\
        \end{array}
        $$
\end{itemize}
For each of these cases, it can be easily checked that the auxiliary variables satisfy 
Equations~\ref{eq9}-\ref{eq14}.

We have shown that given a satisfying truth assignment $\sigma$, there exists a mapping, 
$\phi_{\sigma}: S \rightarrow [-n-1,n+1]$.  This mapping induces a natural ordering on $S$ that
we will call $f_{\sigma}: S \rightarrow [0,|S|+1]$.  For $s_1,s_2\in S$, 
$$
    f_{\sigma}(s_1) < f_{\sigma}(s_2) \iff \phi_{\sigma}(s_1) < \phi_{\sigma}(s_2).
$$
Note that this fixes the minimal and maximal elements, $m$ and $M$ so that $f_{\sigma}(m) = 0$ and
$f_{\sigma}(M) = |S|+1$.  Further, we note that $f_{\sigma}$  satisfies Equations~\ref{eq1}-\ref{eq14}, 
since $\phi_{\sigma}$ satisfied them and has the same ordering.  We note that, by construction,
each satisfying truth assignment, $\sigma$, uniquely defines $f_{\sigma}$.  This construction
can be done in quadratic time in $|S|$.  Since the number of auxiliary variables in $S$ is bounded 
polynomially in $n$, we have a polynomial time reduction from {\sc cNAE-3SAT} to {\sc cBetweenness}.

Now we need to show the converse. Given a betweenness ordering on $S$ that satisfies all the triples obtained
from an instance, we need to define an assignment that for every clause, there is one literal
satisfied and one literal falsified. We define the assignment as follows. For all symbols $x_i$ (resp. $-x_i$) such that $f(x_i)>f(X)$ (resp. $f(-x_i)>f(X)$), we assign $x_i$ (resp. $-x_i$) to {\it true}. Similarly,  for all symbols $x_i$ (resp. $-x_i$) such that $f(x_i)<f(X)$ (resp. $f(-x_i)<f(X)$), we assign $x_i$ (resp. $-x_i$) to {\it false}. Furthermore, let $m$ be the constant $F$, and let $M$ be the constant $T$.
Now, we have to see that the assignment obtained
satisfies at least one literal, and falsifies at least one literal of every clause. Equation~\ref{eq5} ensures
that for a given clause $a_k\vee b_k\vee c_k$, not all three literals can be to the right of $X$ or to the left
of $X$. Therefore, the assignment created from the ordering will set at least one literal to {\it true} and at least
one literal to {\it false}.  Lastly, we note that if two betweenness orderings on $S$, $f_1$ and $f_2$, yield identical
truth assignments on $\{x_1,\ldots,x_n\}$, then, by Equations~\ref{eq1}-\ref{eq4}, $f_1$ and $f_2$ agree on the 
ordering of $\{x_1,\ldots,x_n,-x_1,\ldots,-x_n,m,M,X\}$.  Further, Equations~\ref{eq5}-\ref{eq14} fix the remainder
auxiliary variables, and as such, we must have $f_1 = f_2$.  
\end{proof}

\begin{figure}
\begin{center}
\includegraphics[height=0.95in]{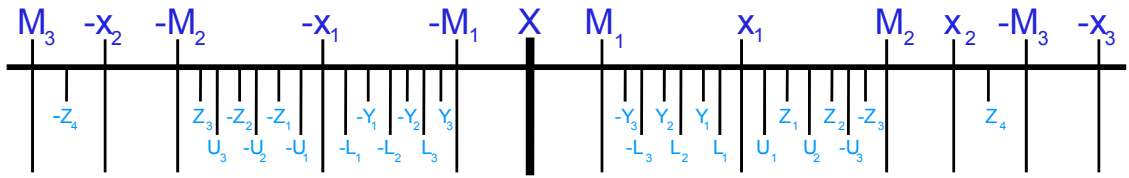}
\end{center}
\caption{\small Equations~\ref{eq9}-\ref{eq14} fix the locations of the $Z$ symbols, as well as the auxiliary variables
uniquely in the order.  Above illustrates the location of these auxiliary symbols for the $\sigma_1$ truth assignment
from Figure~\ref{fig:ex1}.}
\label{fig:orderZ}
\end{figure}

\subsection{The Quartet Challenge is coNP-complete}
In this section, we show that the {\sc Quartet Challenge} is coNP-complete. To this end, we extend the original argument
of Steel \cite[Theorem 1]{steel92} that showed that the related question, {\sc Quartet Compatibility},
is NP-complete.  He reduced {\sc Betweenness} to {\sc Quartet Compatibility} by mapping a betweenness
ordering to a caterpillar tree (for a definition, see below).  Under that reduction, multiple solutions to an instance $\psi$ of the
{\sc Quartet Compatibility} problem could correspond to a single solution of the {\sc Betweenness} instance that is obtained by transforming $\psi$.
To prove that the {\sc Quartet Challenge} is coNP-complete, we extend Steel's polynomial-time reduction from {\sc Betweenness} to an ASP-reduction from {\sc cBetweenness} to {\sc Quartet Compatibility}.

We first give some additional definitions. Let $A$ be a finite set, and let $C$ be a set of ordered triples of elements from $A$. Let $(x,y)$ be a pair of elements of $A$ such that no triple of $C$ contains $x$ and $y$. We say that $(x,y)$ is a {\it lost pair} with regards to $A$ and $C$.

Let $T$ be an unrooted phylogenetic tree. A pair of leaves $(a,b)$ of $T$ is called a {\it cherry} (or {\it sibling pair})
if $a$ and $b$ are leaves that are adjacent to a common vertex of $T$. Furthermore, $T$ is a {\it caterpillar} if $T$ is binary and has exactly two cherries. Following \cite{steel92}, we say that $T$ is an {\it $\alpha\beta$-caterpillar}, if $\alpha$ and $\beta$ are leaves of distinct cherries of $T$. We write $\alpha x_1|x_2x_3\ldots x_{n-1}|x_n\beta$ to denote the caterpillar whose two cherries are $(\alpha,x_1)$ and $(x_n,\beta)$, and, for each $i\in\{1,2,\ldots,n\}$, the path from $\alpha$ to $x_i$ consists of $i+1$ edges. Now, let $\alpha x_1|x_2x_3\ldots x_{n-1}|x_n\beta$ be an $\alpha\beta$-caterpillar. For each $i,j\in\{1,2,\ldots,n\}$ with $i\ne j$, we say that the path from $x_i$ to $x_j$ {\it crosses} $x_k$ if and only if $i<k<j$ or $j<k<i$. For example, Figure~\ref{fig:caterpillar} shows an $\alpha x_1|x_2x_3x_4|x_5\beta$-caterpillar whose path from $x_1$ to $x_4$ crosses $x_2$ and $x_3$.

Before we prove the main result of this section (Theorem~\ref{t:top2comp}), we need a lemma.

\begin{figure}
\center
\includegraphics{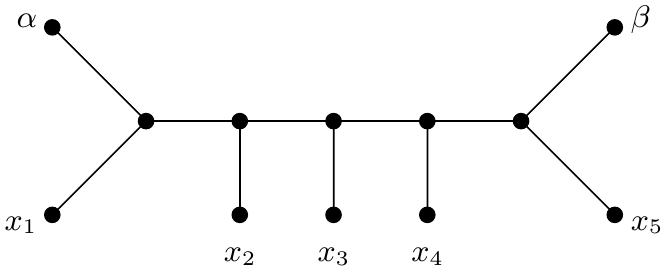}
\caption{\small The caterpillar $\alpha x_1|x_2x_3x_4|x_5\beta$.}
\label{fig:caterpillar}
\end{figure}

\begin{lemma}\label{l:cherry}
Let $T$ be a phylogenetic tree, and let $ab|cd$ be a quartet that is displayed by $T$. Then no element of $\{(a,c),(a,d),(b,c),(b,d)\}$ is a cherry of $T$.
\end{lemma}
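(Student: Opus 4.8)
The plan is to prove the contrapositive-style statement directly by examining what it means for a pair to be a cherry. Suppose $T$ displays the quartet $ab|cd$, and suppose, for the sake of contradiction, that one of the four pairs in $\{(a,c),(a,d),(b,c),(b,d)\}$ is a cherry of $T$; by the symmetry of the quartet notation $ab|cd = cd|ab$ (and the interchangeability of $a$ with $b$, and of $c$ with $d$, within each side of the split), it suffices to treat a single representative case, say $(a,c)$ being a cherry, and the other three cases follow by relabeling.

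The key step is to pass from $T$ to its restriction $T|\{a,b,c,d\}$. Since $T$ displays $ab|cd$, by definition $T|\{a,b,c,d\}$ is isomorphic to the quartet $ab|cd$, so the central edge of this restricted quartet separates $\{a,b\}$ from $\{c,d\}$. The crucial observation is that if $(a,c)$ were a cherry in $T$, then $a$ and $c$ are both adjacent to a common vertex $v$ in $T$. I would then argue that $v$ lies on the path between $a$ and $c$ in $T$, and that after restricting to $\{a,b,c,d\}$ and suppressing degree-two vertices, $a$ and $c$ would be forced onto the same side of the central split of $T|\{a,b,c,d\}$ — that is, the minimal subtree $T(\{a,c\})$ is just the single edge-path through $v$, and nothing in $T$ can route $a$ and $c$ to opposite sides of the quartet. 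This contradicts $T|\{a,b,c,d\} \cong ab|cd$, in which $a$ and $c$ are separated by the central edge (so the path from $a$ to $c$ must pass through the cherry-vertex of $\{a,b\}$, an internal vertex, and then the central edge). Hence no such cherry can exist.

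The cleanest way to formalize the middle step is to recall that in the quartet $ab|cd$, the pair $(a,b)$ is a cherry and the pair $(c,d)$ is a cherry, and these are the only two cherries; in particular $(a,c)$ is not a cherry of the quartet itself. The reason a cherry of $T$ restricts to a cherry (or at least to an adjacency pattern incompatible with separation) is that if $a$ and $c$ are adjacent to a common vertex $v$ in $T$, then on the path from $a$ to $c$ in $T$ there are no other leaves from $\{a,b,c,d\}$, so when we form $T(\{a,b,c,d\})$ and suppress degree-two vertices, $a$ and $c$ remain adjacent to a common (possibly relabeled) vertex, making $(a,c)$ a cherry of $T|\{a,b,c,d\}$ — contradicting that the only cherries of $ab|cd$ are $(a,b)$ and $(c,d)$.

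I expect the main obstacle to be making the restriction/suppression argument rigorous rather than pictorial: specifically, showing that a cherry $(a,c)$ in $T$ survives as a cherry in $T|\{a,b,c,d\}$, which requires arguing that the common neighbor $v$ of $a$ and $c$ does not get contracted in a way that separates them. The key fact to invoke is that the minimal subtree $T(\{a,c\})$ connecting $a$ and $c$ is a single path $a$–$v$–$c$ of length two, so $v$ persists (as the unique internal branch point adjacent to both $a$ and $c$) in $T(\{a,b,c,d\})$, and the suppression of degree-two vertices never merges leaves. Once this is pinned down, the contradiction with the known cherry structure of $ab|cd$ is immediate, and the symmetry reduction handles the remaining three pairs with no extra work.
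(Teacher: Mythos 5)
Your proof is correct, but it takes a slightly different route from the paper's. The paper's proof is a two-line argument carried out entirely in the ambient tree $T$: since $T$ displays $ab|cd$, the path from $a$ to $b$ in $T$ is disjoint from the path from $c$ to $d$ in $T$, whereas a cherry $(a,c)$ would place the common neighbor $v$ of $a$ and $c$ on both paths (every path leaving a leaf must pass through its unique neighbor), an immediate contradiction. You instead pass to the restriction $T|\{a,b,c,d\}$ and argue that the cherry $(a,c)$ survives restriction and suppression of degree-two vertices, contradicting the fact that the only cherries of the quartet $ab|cd$ are $(a,b)$ and $(c,d)$. Your survival argument is sound --- the decisive point, which you could state a bit more explicitly, is that $v$ acquires a third neighbor in $T(\{a,b,c,d\})$ from the path toward $b$ and $d$, so it has degree at least three there and is never contracted. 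The trade-off: the paper's version is shorter but silently lifts the path-disjointness characterization of $ab|cd$ from the quartet to any displaying tree, while yours works directly from the paper's literal definition of ``displays'' (isomorphism of the restriction) at the cost of the extra restriction-tracking step. Both are complete; your symmetry reduction to the single case $(a,c)$ is fine.
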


\begin{proof}
By the definition of a quartet, the path from $a$ to $b$ in $T$ does not intersect the path from $c$ to $d$ in $T$. Thus, no element of $\{(a,c),(a,d),(b,c),(b,d)\}$ is a cherry of $T$.
\end{proof}

\begin{theorem}\label{t:top2comp}
{\sc Quartet Compatibility} is ASP-complete.
\end{theorem}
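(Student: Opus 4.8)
The plan is to show membership in FNP and then exhibit an ASP-reduction from {\sc cBetweenness}, which is ASP-complete by Theorem~\ref{t:cBetweenness}. Membership is immediate: given a set $Q$ of quartets and a candidate tree $T$, one checks in polynomial time whether $T|L(q)\cong q$ for every $q\in Q$, so {\sc Quartet Compatibility} lies in FNP. The substance is the reduction $f$. Given an instance $\psi=(A,C)$ of {\sc cBetweenness} with minimum $m$ and maximum $M$, I would build a set $Q=f(\psi)$ of quartets on the leaf set $A\cup\{m,M\}$ by extending Steel's \cite{steel92} map from betweenness orderings to caterpillars. The guiding idea is that a valid betweenness ordering $m<a_1<\dots<a_{|A|}<M$ should correspond to the $\alpha\beta$-caterpillar with $\alpha=m$ and $\beta=M$ as its two cherry-tips, read from the $m$-end, and $f$ must be engineered so that this correspondence is a genuine bijection between betweenness orderings of $\psi$ and trees displaying $Q$.

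For the encoding itself, for each triple $(a,b,c)\in C$ I would add a small gadget of quartets (following \cite{steel92,opatrny79}) that forces $b$ to lie strictly between $a$ and $c$ along the spine of the caterpillar, anchored by $m$ and $M$. Since the betweenness constraint is symmetric in $a$ and $c$ while a single quartet is not, the gadget must be chosen so that its quartets are jointly displayed precisely when $b$ is between $a$ and $c$, for \emph{either} relative order of $a$ and $c$. The easy half of the equivalence is then the forward direction: each betweenness ordering $\sigma$ yields a caterpillar $T_\sigma$ that displays every gadget (because $b$ is between $a$ and $c$ in $\sigma$), so $Q$ is compatible whenever $\psi$ is solvable, and distinct orderings clearly give distinct caterpillars.

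The converse---recovering a unique ordering from an arbitrary tree displaying $Q$---is the crux, and is exactly where Steel's reduction fails to be parsimonious. Two obstacles must be removed. First, a tree displaying $Q$ need not, a priori, be a caterpillar, and $m,M$ need not be its extreme leaves. I would add further quartets so that, via Lemma~\ref{l:cherry}, almost every pair of leaves is forbidden from forming a cherry, thereby forcing any compatible binary tree to be a caterpillar whose two cherry-tips are $m$ and $M$. Such a tree then reads off a single total order from the $m$-end, and here the passage from {\sc Betweenness} to {\sc cBetweenness} pays off: the unrooted caterpillar is reversal-symmetric, but only the reading with $m$ first and $M$ last is a legal {\sc cBetweenness} ordering, so each tree yields exactly one ordering instead of the two reversal-related orderings that spoil the plain {\sc Betweenness} reduction.

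Second---and this is the main obstacle---I must ensure that no betweenness ordering is displayed by more than its own caterpillar, the ``multiple trees for one ordering'' phenomenon flagged at the start of this subsection. The culprits are \emph{lost pairs}, i.e.\ pairs of elements co-occurring in no triple, whose relative placement the triple-gadgets leave unconstrained in the tree. The delicate point is to add just enough quartets, referencing $m$ and $M$, to pin down the placement of such leaves uniquely, while taking care not to forbid any genuinely distinct betweenness ordering (whose own reversal-free reading must survive as a distinct caterpillar). Once the forward map $\sigma\mapsto T_\sigma$ and the backward map $T\mapsto\sigma_T$ are shown to be mutually inverse, $f$ is an ASP-reduction, establishing ASP-completeness. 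As noted before this subsection, by \cite[Theorem 3.4]{yato03} the associated ``another solution'' problem is then NP-complete, and by the preceding Remark the {\sc Quartet Challenge} is coNP-complete.
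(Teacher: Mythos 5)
Your outline matches the paper's strategy in its broad strokes---FNP membership, an ASP-reduction from {\sc cBetweenness} built on Steel's triple gadgets, cherry-forbidding quartets via Lemma~\ref{l:cherry} to force a caterpillar, the $m/M$ asymmetry to kill the reversal of the unrooted caterpillar, and extra gadgets for lost pairs---and every one of these ingredients does appear in the actual proof. But there is a genuine gap in your accounting of where the ``multiple trees for one ordering'' phenomenon comes from: you attribute it entirely to lost pairs. The other, and technically dominant, source is the set of auxiliary taxa that Steel's gadget itself introduces. The betweenness of $b$ within a triple $(a,b,c)$ cannot be expressed by quartets over $A\cup\{m,M\}$ alone (for either orientation it would require something like the disjunction of $ma|bc$ and $mc|ab$, which no quartet set can encode), so each triple gadget must carry four fresh leaves $p_i,p'_i,q_i,q'_i$; your stated leaf set $A\cup\{m,M\}$ is therefore inconsistent with the gadget you invoke. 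Once these $4n'$ extra leaves exist, a single betweenness ordering of $A\cup\{m,M\}$ is displayed by many caterpillars that differ only in how the gadget pairs are interleaved with one another and with the elements of $A$. The paper spends the bulk of its construction (the quartet families $Q_3$, $Q_4$, $Q_5$) and its Claim~1 on exactly this point: nesting the $p$-pairs and $q$-pairs in a fixed index order outside the block of original elements, so that every tree in ${<}Q{>}$ is determined by its restriction to $A\cup\{m,M,\alpha,\beta\}$. Without some analogue of this, the map from orderings to trees is not single-valued and the reduction fails to be an ASP-reduction even after the caterpillar shape and the lost pairs are handled.

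Two smaller points. First, you make $m$ and $M$ the cherry-tips themselves, whereas the paper introduces two new anchor leaves $\alpha,\beta$, parks all auxiliary leaves between $\alpha$ and the $A$-block (respectively between the $A$-block and $\beta$), and uses the separate family $Q_6=\{\alpha m|a_ia_j,\ a_ia_j|M\beta\}$ to pin $m$ and $M$ to the two ends of the $A$-block; reversal symmetry is then broken because the reversed tree violates $Q_6$ while still displaying everything else. Your variant forces the auxiliary leaves to live inside the $m$-to-$M$ span, interleaved with $A$, which changes what the ordering gadgets must assert, and you have not checked that a consistent gadget set exists under that layout. Second, the lost-pair gadgets and the cherry-forbidding quartets are flagged as ``delicate'' but never constructed, and the claim that the two maps are mutually inverse is stated rather than argued; as written, the proposal is a correct plan whose hardest steps remain to be carried out.
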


\begin{proof}
We start by noting that it clearly takes polynomial time to decide whether or not a phylogenetic tree $T$ displays a given set $Q$ of quartets since it is sufficient to check if $T|L(q)\cong q$ for each quartet $q$ in $Q$. Hence, {\sc Quartet Compatibility} is in FNP.

To show that {\sc Quartet Compatibility} is ASP-complete, we next describe an ASP-reduction from {\sc cBetweenness} to {\sc Quartet Compatibility}. 
Let $\psi$ be an instance of {\sc cBetweenness} over a finite set $A=\{a_1,a_2,\ldots,a_s\}\cup\{m,M\}$. Let $C=\{\pi_1,\pi_2,\ldots,\pi_n\}$ be the set of triples of $\psi$, with $\pi_i=(b_i,c_i,d_i)$ for each $i\in\{1,2,\ldots,n\}$, such that each element of $A\cup\{m,M\}$ is contained in at least one triple. Recall that $m$ is the first and $M$ the last element of each betweenness ordering $f$ of $A\cup\{m,M\}$ for $C$; that is $f(m)=0$ and $f(M)=|A|+1$. For simplicity throughout this proof, let $A'=A\cup\{m,M\}$. Furthermore, let $L=\{\tau_{n+1},\tau_{n+2},\ldots,\tau_{n'}\}$ precisely be the set of all lost pairs with regards to $A'$ and $C$, where $\tau_i=(x_i,y_i)$ for each $i\in\{n+1,n+2,\ldots,n'\}$. \\

We next describe six sets of quartets:
\begin{enumerate}
  \item Each triple $\pi_i=(b_i,c_i,d_i)$ in $C$ is represented by 6 quartets in $$Q_1=\bigcup_{i=1}^n Q_{\pi_i}= \bigcup_{i=1}^n\{p_ip_i'|b_ic_i, p_ib_i|c_id_i, p_ic_i|d_iq_i, p_id_i|q_iq_i',\alpha p_i|p_i'\beta,\alpha q_i|q_i'\beta\}.$$
  \item Each lost pair $\tau_i=(x_i,y_i)$ in $L$ is represented by 5 quartets in $$Q_2=\bigcup_{i=n+1}^{n'}Q_{\tau_i}=\bigcup_{i=n+1}^{n'}\{p_ip_i'|x_iy_i, p_ix_i|y_iq_i, p_iy_i|q_iq_i', \alpha p_i|p_i'\beta,\alpha q_i|q_i'\beta\}.$$
  \item Let $a_j,a_k\in A'$, and let $a$ be any fixed element of $A$. Set $Q_3$, $Q_4$, and $Q_5$ to be the following: $$Q_3=\bigcup_{i=2}^{n'}\bigcup_{j=1}^{i-1}\{p_ip'_i|p_ja, p_ip'_i|p_j'a, q_iq'_i|q_ja, q_iq'_i|q_j'a\},$$ $$Q_4=\bigcup_{i=1}^{n'}\bigcup_{j=1}^{n'}\{p_ip_i'|q_jq_j', p_ip_i'|q_ja, p_ip_i'|q_j'a\},\textnormal{ and}$$
 $$Q_5=\bigcup_{i=1}^{n'}\bigcup_{j=2}^{s+2}\bigcup_{k=1}^{j-1}\{p_ip_i'|a_ja_k, q_iq_i'|a_ja_k\}.$$
  \item Let $a_i,a_j\in A$, and set $Q_6$ to be the following: $$Q_6=\bigcup_{i=2}^{s}\bigcup_{j=1}^{i-1}\{\alpha m|a_ia_j, a_ia_j|M\beta\}.$$
\end{enumerate}
Noting that $n'$ is in the order of $O(|A|^3)$,  the quartet set $$Q=\bigcup_{i=1}^6 Q_i$$ can be constructed in polynomial time.\\

\noindent We note that for Steel's original proof~\cite[Theorem 1]{steel92}, in which he describes a polynomial-time reduction from {\sc Betweenness} to {\sc Quartet Compatibility} in order to show that the latter decision problem is NP-complete, the construction of $Q_1$ is sufficient.\\

A straightforward analysis of the quartets in $Q_{\pi_i}$ and $Q_{\tau_i}$, respectively, shows that ${<}Q_{\pi_i}{>}$ and ${<}Q_{\tau_i}{>}$ contain the following phylogenetic trees which are all $\alpha\beta$-caterpillars: $${<}Q_{\pi_i}{>}=\{\alpha p_i|p_i'b_ic_id_iq_i|q_i'\beta, \alpha q_i|q_i'd_ic_ib_ip_i|p_i'\beta\}$$ for each $i\in\{1,2,\ldots,n\}$ and $${<}Q_{\tau_i}{>}=\{\alpha p_i|p_i'x_iy_iq_i|q_i'\beta, \alpha q_i|q_i'y_ix_ip_i|p_i'\beta\}$$ for each $i\in\{n+1,n+2,\ldots,n'\}$. \\

Let $T$ be a phylogenetic tree of ${<}Q{>}$. By $Q_1$ and $Q_2$, it is easily checked that either, if $p_i$ and $p_i'$ are both crossed by the path from $\alpha$ to $b_i$ (resp. $x_i$) in $T$, then $q_i$ and $q_i'$ are both crossed by the path from $\beta$ to $b_i$ (resp. $x_i$) in $T$, or if $q_i$ and $q_i'$ are both crossed by the path from $\alpha$ to $b_i$ (resp. $x_i$) in $T$, then $p_i$ and $p_i'$ are both crossed by the path from $\beta$ to $b_i$ (resp. $x_i$) in $T$ for when $i\in\{1,2,\ldots,n\}$ (resp. $i\in\{n+1,n+2,\ldots,n'\}$). We refer to this property of $T$ as the {\it desired $pq$-property for $i$}.\\

Let $V$ be the set $\{p_1,\ldots,p_{n'},p_1',\ldots, p_{n'}',q_1,\ldots,q_{n'},q_1',\ldots,q_{n'}'\}$, and let $T$ be a phylogenetic tree in ${<}Q{>}$. We continue with making several observations that will be important in what follows:
\begin{enumerate}
\itemsep 0pt
    \item The second and third quartet in $Q_1$, the second quartet in $Q_2$, and Lemma~\ref{l:cherry} imply that
    $T$ does not have a cherry $(a,b)$ with $a,b\in A'$. 
    \item The quartets in $Q_5$ and Lemma~\ref{l:cherry} imply that $T$ does not have a cherry $(a,b)$ with $a\in A'$ and $b\in V$. 
    \item The last two quartets in $Q_1$ and $Q_2$, the quartets in $Q_3$, the first quartet in $Q_4$, and Lemma~\ref{l:cherry}
    imply that $T$ does not have a cherry $(a,b)$ with $a,b\in V$. 
\end{enumerate}
In conclusion, $T$ is an $\alpha\beta$-caterpillar. This observation leads to a number of additional properties that are satisfied by $T$:
\begin{itemize}
  \item [(i)] By $Q_5$ and the desired $pq$-property for each $i\in\{1,2,\ldots,n'\}$, the subtree $T(A')$ can be obtained from $T$ by deleting exactly two of its edges. 
  \item [(ii)] Let $i\in\{1,2,\ldots,n'\}$. Let $P$ contain each element of $(\{p_1,p_2,\ldots,p_{n'},p_1',p_2',\ldots,p_{n'}'\}-\{p_i,p_i'\})$ that is crossed by the path from $p_i$ (and $p_i'$) to $a$ in $T$ for any $a\in A'$. Then, by $Q_3$, each element in $P$ has an index that is smaller than $i$. Analogously, let $P'$ contain each element of $(\{q_1,q_2,\ldots,q_{n'},q_1',q_2',\ldots,q_{n'}'\}-\{q_i,q_i'\})$ that is crossed by the path from $q_i$ (and $q_i'$) to $a$ in $T$ for any $a\in A'$. Then, again by $Q_3$, each element in $P'$ has an index that is smaller than $i$.
  \item [(iii)] Let $a\in A'$. By the second and third quartet of $Q_4$ the path from $q_i$ (and $q_i'$) to $a$ does not cross an element of $\{p_1,p_2,\ldots,p_{n'},p_1',p_2',\ldots,p_{n'}'\}$ in $T$ for each $i\in\{1,2,\ldots,n'\}$.
  \item [(iv)] By (i)-(iii), the path from $p_i$ (resp. $q_i$) to $p_i'$ (resp. $q_i'$) in $T$ consists of 3 edges for each $i\in\{1,2,\ldots,n'\}$. In particular, by the last two quartets of $Q_1$ and $Q_2$, respectively, the path from $\alpha$ to $p_i'$ (resp. $q_i'$) crosses $p_i$ (resp. $q_i$) and the path from $\beta$ to $p_i$ (resp. $q_i$) crosses $p_i'$ (resp. $q_i'$) in $T$.
  \item[(v)] By $Q_6$, the path from $m$ to $M$ crosses each element in $A$. Furthermore, neither the path from $\alpha$ to $m$ nor the path from $\beta$ to $M$ crosses an element of $A$.
\end{itemize}

\noindent To illustrate, Figure~\ref{fig:top2comp} shows an $\alpha\beta$-caterpillar $T$ of ${<}Q{>}$ and, consequently, satisfies properties (i)-(v) for an instance of {\sc cBetweenness} that contains the four triples $\pi_1=(m,a,b)$, $\pi_2=(M,c,b)$, $\pi_3=(c,a,m)$, and $\pi_4=(m,b,M)$. Note that $T(A')$ can be obtained from $T$ by deleting the two edges $e$ and $e'$. \\

Now, let $T$ be a phylogenetic tree in ${<}Q{>}$. Let $T'$ be the phylogenetic tree obtained from $T$ by interchanging $\alpha$ and $\beta$, and for each $i\in\{1,2,\ldots,n'\}$, interchanging $p_i$ and $p_i'$, and $q_i$ and $q_i'$. Noting that $T'$ does display $Q\backslash Q_6$ but does not display $Q$ since property (v) is not satisfied, the rest of this proof essentially consists of two claims.\\

\begin{figure}
\center
\includegraphics[width=6in]{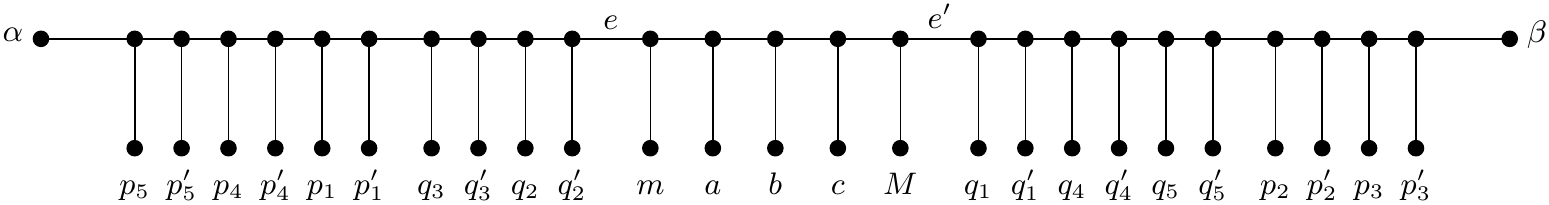}
\caption{\small An $\alpha\beta$-caterpillar in ${<}Q{>}$ that satisfies properties (i)-(v) for an instance of {\sc cBetweenness} that consists of the four triples $\pi_1=(m,a,b)$, $\pi_2=(M,c,b)$, $\pi_3=(c,a,m)$, and $\pi_4=(m,b,M)$. Note that the associated set of lost pairs only contains $\tau_5=(a,M)$. Details on how to construct $Q$ are given in in the proof of Theorem~\ref{t:top2comp}.}
\label{fig:top2comp}
\end{figure}

\noindent{\bf Claim 1.} Let $T$ and $T'$ be two elements of ${<}Q{>}$. Then $T\cong T'$ if and only if $T|(A'\cup\{\alpha,\beta\})\cong T'|(A'\cup\{\alpha,\beta\})$.\\

Trivially, if $T\cong T'$, then in particular $T|(A'\cup\{\alpha,\beta\})\cong T'|(A'\cup\{\alpha,\beta\})$. To prove the claim it is therefore sufficient to show that, if $T\ncong T'$, then $T|(A'\cup\{\alpha,\beta\})\ncong T'|(A'\cup\{\alpha,\beta\})$. Assume the contrary. Then there exist two distinct elements $T$ and $T'$ in ${<}Q{>}$ such that $T|(A'\cup\{\alpha,\beta\})\cong T'|(A'\cup\{\alpha,\beta\})$. Let $a$ be an element of $A'$. Since both of $T$ and $T'$ are $\alpha\beta$-caterpillars that satisfy properties (i)-(v), there exists an $i\in\{1,2,\ldots,n'\}$ such that the path from $\alpha$ to $a$ crosses $p_i$ and $p_i'$ in one of $T$ and $T'$, say $T$, while the path from $\alpha$ to $a$ crosses $q_i$ and $q_i'$ in $T'$. By the desired $pq$-property for each $i$ and property (i), note that the path from $\beta$ to $a$ crosses $q_i$ and $q_i'$ in $T$, and the path from $\beta$ to $a$ crosses $p_i$ and $p_i'$ in $T'$.
If $i\in\{1,2,\ldots,n\}$, let $S=\{\alpha,\beta,p_i,p_i',q_i,q_i',b_i,c_i,d_i\}$, and if $i\in\{n+1,n+2,\ldots,n'\}$, let $S=\{\alpha,\beta,p_i,p_i',q_i,q_i',x_i,y_i\}$.  Since $T|(S-\{p_i,p_i',q_i,q_i'\})\cong T'|(S-\{p_i,p_i',q_i,q_i'\})$, it now follows that either $T|S$ or $T'|S$ is not an element of ${<}Q_{\pi_i}{>}$ (if $i\in\{1,2,\ldots,n\}$) or ${<}Q_{\tau_i}{>}$ (if $i\in\{n+1,n+2,\ldots,n'\}$); thereby contradicting that $T$ and $T'$ are both in ${<}Q{>}$. This completes the proof of Claim 1.\\

\noindent{\bf Claim 2.} $Q$ is compatible if and only if $A'$ has a betweenness ordering $f$ for $C$ with $f(m)=0$ and $f(M)=|A|+1$. In particular, there is a bijection from the solutions of $\psi$ to the elements in ${<}Q{>}$.\\

First, suppose that $Q$ is compatible. Again, let $T$ be an unrooted binary phylogenetic tree in ${<}Q{>}$. Recall that the sets ${<}Q_{\pi_i}{>}$ and ${<}Q_{\tau_i}{>}$ both contain two $\alpha\beta$-caterpillars.
Thus $$T|\{p_i,p_i',q_i,q_i',\alpha,\beta,b_i,c_i,d_i\}$$ is isomorphic to one phylogenetic tree of ${<}Q_{\pi_i}{>}$ for each $i\in\{1,2,\ldots,n\}$, and $$T|\{p_i,p_i',q_i,q_i',\alpha,\beta,x_i,y_i\}$$ is isomorphic to one phylogenetic tree of ${<}Q_{\tau_i}{>}$ for each $i\in\{n+1,n+2,\ldots,n'\}$. Noting that $T$ is an $\alpha\beta$-caterpillar, we next define a betweenness ordering of $A'$ for $C$. Let $T^*$ be $T|\{A'\cup\{\alpha,\beta\}\}$, and define $f:A'\rightarrow\{1,2,\ldots,|A'|\}$ such that $2+f(a_j)$ denotes the number of edges on the path from $\alpha$ to $a_j$ in $T^*$ for each $a_j\in A'$. Since $T$ displays $Q_{\pi_i}$ for each $i\in\{1,2,\ldots,n\}$ and the path from $b_i$ to $d_i$ crosses $c_i$ in both phylogenetic trees of ${<}Q_{\pi_i}{>}$, it follows that $f(b_i)<f(c_i)<f(d_i)$ or $f(d_i)<f(c_i)<f(b_i)$. Since this holds for each $\pi_i\in C$, it follows that $f$ is a betweenness ordering of $A'$ for $C$. In particular, by property (v), we have $f(m)=0$ and $f(M)=|A|+1$. Furthermore, by Claim 1 and the paragraph prior to Claim 1, each element of ${<}Q{>}$ is mapped to a distinct betweenness ordering of $A'$ for $C$ with $f(m)=0$ and $f(M)=|A|+1$. \\

Second, suppose that $A'$ has a betweenness ordering for $C$, and let $f$ be one such ordering with $f(m)=0$ and $f(M)=|A|+1$. Note that $f$ imposes an ordering on each lost pair $(x_i,y_i)$ such that either $f(x_i)<f(y_i)$ or $f(y_i)<f(x_i)$. Furthermore, recall that $n'=|C|+|L|$ and $|A'|=s+2$. Let $T_0$ be the unique $\alpha\beta$-caterpillar, whose label set is $A'\cup\{\alpha,\beta\}$, such that the path from $\alpha$ to $a_j$ in $T_0$ contains $2+f(a_j)$ edges for each $a_j\in A'$. Let $a$ be any element of $A$. Next, we describe the algorithm {\sc BuildTree} that iteratively construct a series $T_1,T_2,\ldots,T_{n'}$ of $\alpha\beta$-caterpillars. Set $i$ to be 1. To obtain $T_i$ from $T_{i-1}$, proceed in the following way:\\

\begin{figure}
\center
\includegraphics{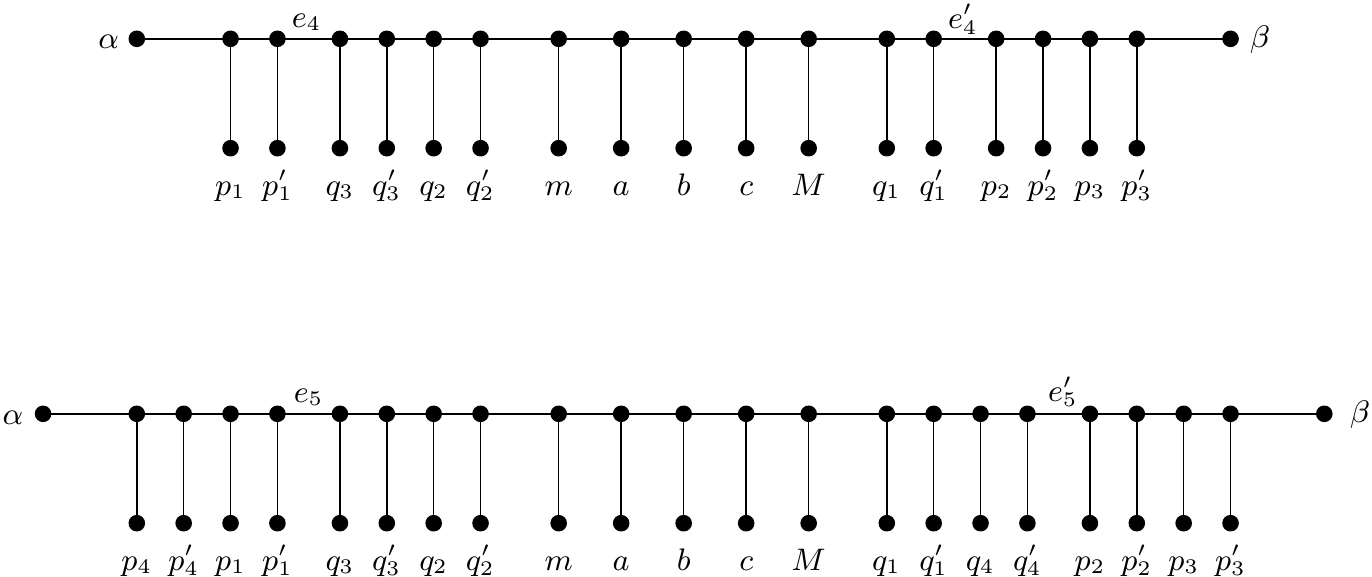}
\caption{\small The intermediate trees $T_3$ (top) and $T_4$ (bottom) that are obtained from applying the algorithm {\sc BuildTree} to the {\sc Betweenness} instance that is described in the caption of Figure~\ref{fig:top2comp} for when $m<a<b<c<M$ is the given betweenness ordering. Note that $T_4$ is obtained from $T_3$ by subdividing twice the edge that is incident with $\alpha$ and $e_4'$, respectively. Furthermore, the tree depicted in Figure~\ref{fig:top2comp} is obtained from $T_4$ by applying one more iteration of {\sc BuildTree}.}
\label{fig:ordering2tree}
\end{figure}

Let $P=\{p_1,p_2,\ldots,p_{i-1},p_1',p_2',\ldots,p_{i-1}'\}$.  We first define two edges $e_i$ and $e_i'$ in $T_{i-1}$. If the path from $\alpha$ to $a$ in $T_{i-1}$ crosses an element of $P$, let $e_i=\{u,v\}$ be the first edge on this path such that $u$ is adjacent to a leaf labeled with an element of $P$ and $v$ is adjacent to a leaf labeled with an element that is not contained in $P$. 
Otherwise, choose $e_i$ to be the edge that is incident with $\alpha$. Similarly, if the path from $\beta$ to $a$ in $T_{i-1}$ crosses an element of $P$, let $e_i'=\{u',v'\}$ be the first edge on this path such that $u'$ is adjacent to a leaf labeled with an element of $P$ and $v'$ is adjacent to a leaf labeled with an element that is not contained in $P$. Otherwise, choose $e_i'$ to be the edge that is incident with $\beta$. Note that $e_i$ and $e_i'$ are uniquely defined. 

To obtain $T_i$ from $T_{i-1}$, we consider two cases. First, if $i\in\{1,2,\ldots,n\}$ and $f(b_i)<f(c_i)<f(d_i)$, or if $i\in\{n+1,n+2,\ldots,n'\}$ and $f(x_i)<f(y_i)$, subdivide the edge incident with $\alpha$ twice and join each of the two newly created vertices with a new leaf labeled $p_i$ and $p_i'$, respectively, by introducing two new edges such that the path from $\alpha$ to $p_i'$ crosses $p_i$. Furthermore, subdivide $e_i'$ twice and join each of the two newly created vertices with a new leaf labeled $q_i'$ and $q_i$, respectively, by introducing two new edges such that the path from $\beta$ to $q_i$ crosses $q_i'$. Second, if $i\in\{1,2,\ldots,n\}$ and $f(d_i)<f(c_i)<f(b_i)$, or if $i\in\{n+1,n+2,\ldots,n'\}$ and $f(y_i)<f(x_i)$, subdivide $e_i$ twice and join each of the two newly created vertices with a new leaf labeled $q_i$ and $q_i'$, respectively, by introducing two new edges such that the path from $\alpha$ to $q_i'$ crosses $q_i$. Furthermore, subdivide the edge incident with $\beta$ twice and join each of the two newly created vertices with a new leaf labeled $p_i'$ and $p_i$, respectively, by introducing two new edges such that the path from $\beta$ to $p_i$ crosses $p_i'$. 
A specific example of the definition of $e_i$ and $e_i'$, respectively, and on how to obtain $T_i$ from $T_{i-1}$ is shown in Figure~\ref{fig:ordering2tree}.

If $i<n'$, increment $i$ by 1 and repeat; otherwise stop. In this way, we obtain a tree $T_{n'}$ that displays $Q$ and, hence $Q$ is compatible. In particular, $T_{n'}$ is an element of ${<}Q{>}$. Furthermore, again by Claim 1 and the paragraph prior to Claim 1, $T_{n'}$ is the unique tree of ${<}Q{>}$ that has the property that $T_{n'}|(A'\cup\{\alpha,\beta\})\cong T_0$. Thus, each betweenness ordering $f$ of $A'$ for $C$ with $f(m)=0$ and $f(M)=|A|+1$ is mapped to a distinct element of ${<}Q{>}$. This completes the proof of Claim 2.\\


It now follows that the presented transformation from an instance of {\sc cBetweenness} to an instance of {\sc Quartet Compatibility} is an ASP-reduction that can be carried out in polynomial time. Hence, {\sc Quartet Compatibility} is ASP-complete. This establishes the proof of this theorem.
\end{proof}

Now recall that ASP-completeness implies NP-completeness of the corresponding decision problem, say $\Pi_d$~\cite[Theorem 3.4] {yato03}. Since $\Pi_d$ is exactly the complementary question of the {\sc Quartet Challenge} (see last paragraph of Section~\ref{sec:prelim}), the next corollary immediately follows.

\begin{corollary}\label{cor:np}
The {\sc Quartet Challenge} is coNP-complete.
\end{corollary}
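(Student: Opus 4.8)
The plan is to obtain the corollary as a direct consequence of Theorem~\ref{t:top2comp} via the Yato--Seta framework, by identifying the {\sc Quartet Challenge} as the complement of the natural ``another solution'' decision problem. First I would invoke Theorem~\ref{t:top2comp}, which states that {\sc Quartet Compatibility} is ASP-complete, and then apply \cite[Theorem~3.4]{yato03}: ASP-completeness of the function problem implies that its associated decision problem $\Pi_d$ is NP-complete. Here $\Pi_d$ takes as input a set $Q$ of quartets together with a binary phylogenetic tree $T\in{<}Q{>}$ and asks whether some tree \emph{distinct} from $T$ displays $Q$. The final step is the standard duality: a language is coNP-complete precisely when its complement is NP-complete, so once $\Pi_d$ is shown to be NP-complete and to coincide with the complement of the {\sc Quartet Challenge}, the corollary follows.

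The second thing I would do is make the complement identification airtight and record the routine bookkeeping. By the Remark in Section~\ref{sec:prelim}, a {\it no} answer to $\Pi_d$ (no distinct displaying tree exists) is exactly a {\it yes} answer to the {\sc Quartet Challenge} ($T$ is the unique such tree), and vice versa; hence the two problems are complements. For the membership half I would note that $\Pi_d\in\textnormal{NP}$ by the guess-and-verify argument already recorded at the start of the proof of Theorem~\ref{t:top2comp}: a certificate is a tree $T'\neq T$, and one checks in polynomial time that $T'|L(q)\cong q$ for every $q\in Q$. Thus the {\sc Quartet Challenge}, being the complement of an NP problem, lies in coNP; and since $\Pi_d$ is NP-hard, the {\sc Quartet Challenge} is coNP-hard. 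Together these give coNP-completeness.

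The one point that needs genuine care---and that I expect to be the main (though modest) obstacle---is the discrepancy between the two notions of ``solution'': the ASP solutions counted by Theorem~\ref{t:top2comp} are the \emph{binary} trees in ${<}Q{>}$, whereas the {\sc Quartet Challenge} asks for uniqueness among \emph{all} phylogenetic $X$-trees displaying $Q$. I would close this gap using the final observation of Section~\ref{sec:prelim}: if $T$ is the unique displaying tree it must be binary, and more generally any non-binary tree displaying $Q$ admits at least two distinct binary refinements, each of which still displays $Q$ and hence lies in ${<}Q{>}$. Consequently, for a binary input $T$, uniqueness among all phylogenetic $X$-trees is equivalent to $|{<}Q{>}|=1$, so the binary ASP count faithfully governs the general uniqueness question. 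With this equivalence in hand, the complement identification and the coNP-completeness conclusion go through with no further work.
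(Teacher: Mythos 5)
Your proposal is correct and follows essentially the same route as the paper: invoke \cite[Theorem 3.4]{yato03} to get NP-completeness of the ``another solution'' decision problem associated with {\sc Quartet Compatibility}, identify that problem as the complement of the {\sc Quartet Challenge} via the Remark in Section~\ref{sec:prelim}, and conclude coNP-completeness. The extra care you take over uniqueness among all (not just binary) trees is exactly what the paper handles with its closing observation of Section~\ref{sec:prelim} that a unique displaying tree must be binary, so the two arguments do not differ in substance.
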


\section{Conclusion}
\label{sec:discussion}
In this paper, we have shown that the two problems {\sc cBetweenness} and {\sc Quartet Compatibility} that have applications in computational biology  are ASP-complete. Thus, given a betweenness ordering or a phylogenetic tree that displays a set of quartets, it is computationally hard to decide if another solution exists to a problem instance of {\sc cBetweenness} and {\sc Quartet Compatibility}, respectively. If there is another solution, then this may imply that a data set that underlies an analysis does not contain enough information to obtain an unambiguous result. Furthermore, by Corollary~\ref{cor:np}, the ASP-completeness of {\sc Quartet Compatibility} implies that the {\sc Quartet Challenge}, which is one of Mike Steel's \$100 challenges~\cite{quartetChallenge}, is coNP-complete.  Lastly, due to~\cite[Theorem 3.4]{yato03}, regardless of how many solutions to an instance of {\sc cBetweenness} or {\sc Quartet Compatibility} are known, it is always NP-complete to decide whether an additional solution exists.

Unless P=NP, the existence of efficient algorithms to exactly solve the above-mentioned two problems is unlikely. Nevertheless, there is a need to develop exact algorithms that solve small to medium sized problem instances and, most importantly, return all solutions. For example, it might be possible to start filling this gap by using fixed-parameter algorithms that have recently proven to be particularly useful to approach many questions in computational biology~\cite{gramm07}.
Alternatively, heuristics and polynomial-time approximation algorithms often provide a valuable tool to efficiently approach problem instances of larger size. While Chor and Sudan~\cite{chor98} established a geometric approach to approximate a betweenness ordering that satisfies at least one half of a given set of constraints, the statement of {\sc Quartet Compatibility} does not directly allow for an approximation algorithm since it is a recognition-type problem. Nevertheless, since compatible quartet sets are rare, the goal of a related problem is, given a set of quartets, to reconstruct a phylogenetic tree that displays as many quartets as possible. This problem is known as the {\sc Maximum Quartet Consistency} problem. Despite its NP-hardness~\cite{berry99,steel92}, 
several exact algorithms (e.g. see~\cite{ben-dor98,wu05} and references therein) as well as a polynomial-time approximation~\cite{jiang00} exist. It would therefore be interesting to investigate if these algorithms can be extended in a way such that they return all solutions in order to analyze whether a unique phylogenetic tree displays a given set of compatible quartets.

We end this paper by noting that the computational complexity of the {\sc Quartet Challenge} changes greatly if all elements in a set of quartets over $n$ taxa have a common taxa, say $x$. By rooting each quartet at $x$, i.e. deleting the vertex labeled $x$ and its incident edge and regarding the resulting degree-2 vertex as the root, we obtain a set $S$ of rooted triples (rooted phylogenetic trees on three taxa). By applying the {\sc Build} algorithm it can now be checked in polynomial time if $S$ is compatible~\cite[Proposition 6.4.4]{sempleSteelBook}. Furthermore, there is a unique rooted phylogenetic tree on $n$ taxa that displays $S$ if and only if {\sc Build} returns a rooted binary phylogenetic tree~\cite[Proposition 2]{bryantSteel95}. If {\sc Build} returns a rooted phylogenetic tree that is not binary, then every refinement of this tree also displays $S$.

\section{Acknowledgments }
We would like to thank Juraj Stacho (University of Haifa) for pointing out an oversight in an earlier version of this paper, and Ward Wheeler (American Museum of Natural History) for insightful conversations.  
The project was partially supported by grants from the Spanish Government (TIN2007-68005-C04-03 to Bonet and Linz), (TIN2010-20967-C04-04 to Bonet)
and from the 
US National Science Foundation (\#09-20920 to St.~John).

\small
\bibliographystyle{IEEEtranS}
\bibliography{quartet}

\end{document}